\newtheorem{lemma}{Lemma}
\newtheorem{theorem}{Theorem}
\newtheorem{definition}{Definition}
\title{Gaussian is All You Need: A Unified Framework for\\ Solving Inverse Problems via Diffusion Posterior Sampling}
\author{Nebiyou Yismaw  \\
University of California Riverside\\
\texttt{nyism001@ucr.edu} \\
\And
Ulugbek S. Kamilov  \\
Washington University in St. Louis \\
\texttt{kamilov@wustl.edu}
\AND
M. Salman Asif   \\
University of California Riverside \\
\texttt{sasif@ucr.edu} \\
}
\begin{document}

\makeatletter
\renewcommand{\@noticestring}{}
\makeatother
\maketitle

\begin{abstract}
Diffusion models can generate a variety of high-quality images by modeling complex data distributions. Trained diffusion models can also be very effective image priors for solving inverse problems. Most of the existing diffusion-based methods integrate data consistency steps by approximating the likelihood function within the diffusion reverse sampling process. 
In this paper, we show that the existing approximations are either insufficient or computationally inefficient. 
To address these issues, we propose a unified likelihood approximation method that incorporates a covariance correction term to enhance the performance and avoids propagating gradients through the diffusion model. The correction term, when integrated into the reverse diffusion sampling process, achieves better convergence towards the true data posterior for selected distributions and improves performance on real-world natural image datasets. Furthermore, we present an efficient way to factorize and invert the covariance matrix of the likelihood function for several inverse problems. Our comprehensive experiments demonstrate the effectiveness of our method over several existing approaches. Code available at \href{https://github.com/CSIPlab/CoDPS}{https://github.com/CSIPlab/CoDPS}.

\end{abstract}

\section{Introduction}

Diffusion-based models \cite{ho2020denoising,sohl2015deep, song2019generative, song2020score} have recently gained attention due to their powerful generative ability by learning complex data distributions. A number of recent methods have used diffusion-based models as priors for solving inverse problems \cite{kawar2022denoising, zhu2023denoising}. 
Diffusion Posterior Sampling (DPS) \cite{chung2023diffusion} and $\Pi$GDM \cite{song2022pseudoinverse} are notable examples that incorporate widely adopted posterior sampling schemes in the reverse diffusion process. Nevertheless, both methods are computationally inefficient as they require computing gradients through the diffusion model to compute the conditional score. Furthermore, DPS uses inexact likelihood covariance matrix estimates (even for simple Gaussian priors).

\begin{figure}[t]
    \centering
    \includegraphics[width=\columnwidth]{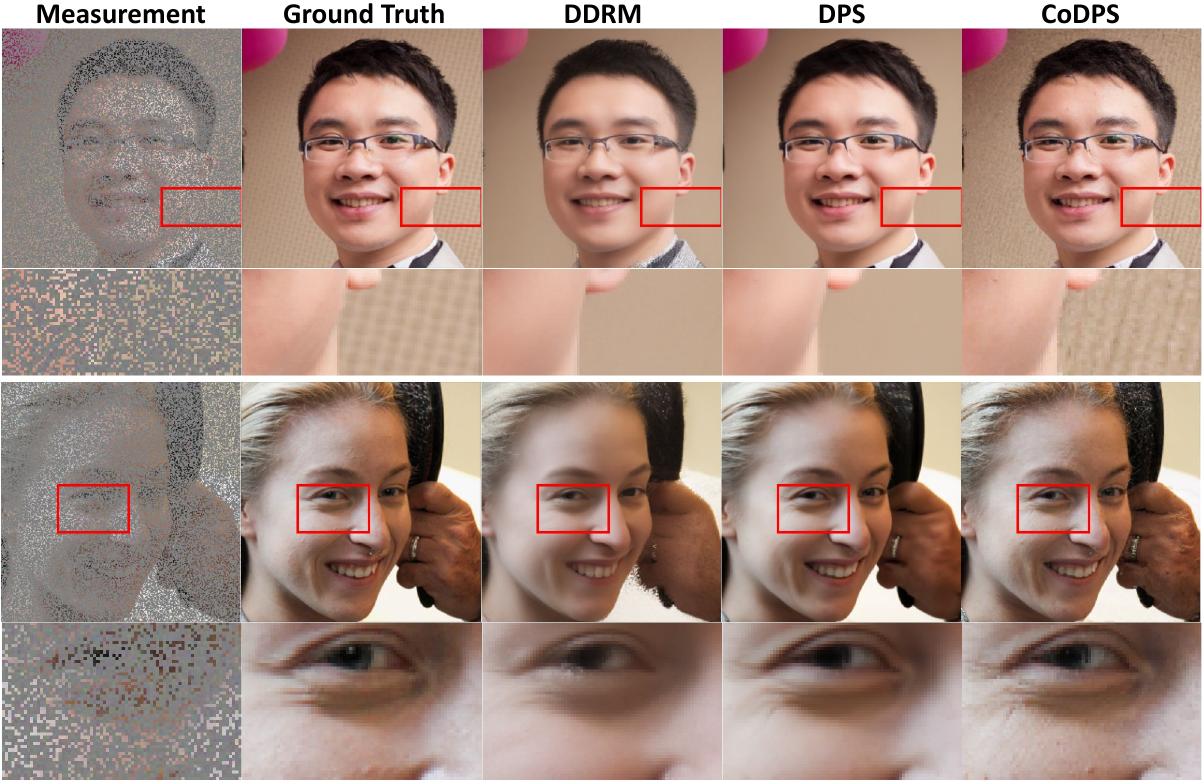}
    \caption{
    Our method effectively recovers fine details, as clearly shown in the zoomed-in images, resulting in outputs that are more consistent with the ground truth.  Particularly, the low-level details, such as the background pattern in the first image and the facial texture in the second, are visible only in the outputs produced by our method, whereas other approaches fail to restore these details.
    }
    \label{fig:ffhq_inp}
\end{figure}

\textbf{In this paper, we show that the Gaussian prior assumption is all that is needed to achieve the best of both worlds, offering both covariance correction and computational efficiency.} Building on this assumption, we propose a novel and simple sampling framework that is not only computationally efficient but also achieves performance that is competitive with, and in some cases surpasses, existing methods. 
Our work begins by computing the distribution of measurements conditioned on intermediate outputs of the diffusion network for simple Gaussian priors. For a linear inverse problem with a Gaussian prior, the conditional distribution has an explicit Gaussian form. This distribution has the same form as the conditional distribution proposed in $\Pi$GDM, even though both methods start from different assumptions (see Sec.~\ref{subsec:prior_approx}). Furthermore, for the isotropic covariance case, our conditional distribution is equivalent to the one used in DPS, up to a scalar constant (see Sec.~\ref{subsec:prior_approx}). While the conditional distribution is generally intractable for non-Gaussian priors, the form obtained under the Gaussian assumption serves as an approximation that results in remarkable reconstruction performance. 
Figure~\ref{fig:ffhq_inp} presents some examples that illustrate how our method successfully recovers fine details that other methods fail to restore. 
\textbf{Another key advantage of this assumption is that it allows us to bypass gradient computation through the network when computing the score of the conditional distribution.} Our experiments show that bypassing gradients through the network can  provide at least $2\times$ reduction in the computational cost. 

{The Gaussian assumption for approximating the conditional score is already used, both implicitly and explicitly, in different diffusion-based posterior sampling methods}. For instance, in $\Pi$GDM  the conditional Gaussian assumption can be shown to be equivalent to the Gaussian prior assumption. As a consequence of this assumption, the method was able to introduce a covariance term in the likelihood. 
A recent method, DDS \cite{chung2024decomposed}, introduces a novel sampling strategy that entirely bypasses gradient computations through the network. Instead, it approximates the gradient by projecting orthogonally onto the clean data manifold. This approach significantly simplifies the sampling process by assuming that the prior distribution is Gaussian with infinite variance.

In this paper, we introduce a \textbf{Co}variance Corrected \textbf{D}iffusion \textbf{P}osterior \textbf{S}ampling (CoDPS) method, which utilizes a likelihood approximation method with a covariance correction term that depends on the diffusion time steps and the forward model. This approximation is exact for Gaussian priors, and we demonstrate that it can improve reconstruction performance in real-world restoration problems. 
Since the correction term is time-dependent, we can encode the uncertainty in our conditional guidance. Specifically, at the early stages of the diffusion process, the uncertainty in our estimates is high and gradually decreases as $t \to 0$. 

{We summarize our main contributions as follows.}
\begin{itemize}
    \item We first demonstrate that by making a simple choice of a Gaussian prior, we get a conditional score that is accurate under this prior and eliminates the need for expensive gradient computations through the diffusion network. 
    \item Building on the Gaussian score approximation, we propose Covariance Corrected Diffusion Posterior Sampling (CoDPS), a novel framework for solving inverse problems. This framework significantly reduces time and memory requirements.
    \item We further accelerate CoDPS by efficient factorization and inversion of the high-dimensional covariance matrices for a family of inverse problems.
    \item We validate the accuracy of CoDPS through a proof-of-concept experiment on a Gaussian mixture and demonstrate its effectiveness on real-world image datasets. Our experiments show that CoDPS  achieves competitive performance with remarkable computational efficiency. 
\end{itemize}
\section{Background}

\noindent\textbf{Notations.} 
In this paper, we denote scalars, vectors, matrices, and operators (or functions) by $x$, $\bm{x}$, $\bm{X}$, and $\mathcal{X}(\cdot)$, respectively. 
We use $\bm{X}^\mathsf{H}, \mathcal{X}^\mathsf{H}$ to denote the Hermitian transpose (i.e., adjoint operation) on matrices and operators. 
To simplify the notations, we represent 2D  images as 1D vectors, but all the image transformations are performed in 2D. 
We will use $\bm{x}_t$ or $\bm{x}(t)$ to denote time-dependent variables. We use $\sigma^2_i, \bm{\Sigma_{i}}$ and $\sigma^2_{i|j},\bm{\Sigma_{i|j}}$ to represent 
the variance, covariance for $p(\bm{x}_i)$ and $p(\bm{x}_i|\bm{x}_j)$, respectively. 

\subsection{Diffusion models}

The goal of DDPMs \cite{ho2020denoising, sohl2015deep} is to model a complex data distribution $q(\bm{x}_0)$ through a forward Markov chain that incrementally transforms structured data into pure noise, and a reverse process that reconstructs data from the noise. With a given noise schedule  $\beta_t \in (0,1)$ for  $t = 1, \dots, T$, the forward process creates noisy perturbations using a Gaussian transition kernel. 
We can express the forward transition to any time $t$ as
\begin{align}
\label{eq:fwd_q_xt_x0}
q(\bm{x}_t|\bm{x}_0) = \mathcal{N}(\bm{x}_t; \sqrt{\bar{\alpha}_t} \bm{x}_0, (1 - \bar{\alpha}_t) \bm{I}),
\end{align}
where $\alpha_t = 1 - \beta_t$ and $\bar{\alpha}_t = \prod_{s=1}^t \alpha_s$. The reverse Markov process has a similar functional form as the forward process and uses learned Gaussian transition kernels $p_\theta(\bm{x}_{0:T})$. The learned reverse diffusion process is modeled as
\begin{align}
\label{eq:p_xtm1_xt}
p_\theta(\bm{x}_{t-1}|\bm{x}_t) = \mathcal{N}(\bm{x}_{t-1}; \bm{\mu}_\theta(\bm{x}_t, t), \bm{\Sigma}_\theta(\bm{x}_t, t)).
\end{align}
During training, we aim to learn $\bm{\mu}_\theta$ and $\bm{\Sigma}_\theta$, which are used to predict $\bm{x}_{t-1}$ given $\bm{x}_t$ during the reverse sampling process. 
The training objective of a diffusion model minimizes the KL divergence between the joint distributions $q(\bm{x}_{0}, \dots, \bm{x}_T)$ and $p(\bm{x}_0, \dots, \bm{x}_T)$. 
The parameterization in \cite{ho2020denoising} demonstrates that the trained denoiser $\bm{\epsilon}_\theta$ can be used to predict the mean,  {which we denote as $\hat{\bm{x}}_0$}: 
\begin{equation}
\label{eq:mean_p_x0_xt}
{\hat{\bm{x}}_0} = \frac{1}{\sqrt{\bar{\alpha}_t}} \left( \bm{x}_t - \sqrt{1 - \bar{\alpha}_t} \bm{\epsilon}_\theta(\bm{x}_t) \right).
\end{equation}
Similarly, the covariance $\bm{\Sigma}_\theta(\bm{x}_t, t)$ can be learned or set as a fixed parameter, such as $\sigma_t^2 \bm{I}$. 
Once the mean predictor $\bm{\epsilon}_\theta$ and the covariance $\sigma_t^2 \bm{I}$ are trained or determined, they can be used in the reverse sampling process. 

Score SDE models \cite{song2020score}
learn the score $\nabla_{\bm{x}_t} \log p(\bm{x}_t)$ with a diffusion network $\bm{s}_\theta(\bm{x}_t,t)$ using denoising score matching objectives \cite{raphan2006learning, raphan2011least, vincent2011connection}.
Once the score network is trained, we can use it to obtain samples from regions of high probability. 
The score function can be approximated by a denoising model that learns the underlying probability distribution of the training data, as shown in previous work \cite{kadkhodaie2021stochastic, song2020score}. The relationship between the denoiser network $\bm{\epsilon}_\theta(\bm{x}_t)$ and the score function $\bm{s}_\theta(\bm{x}_t, t)$ is expressed as
\begin{equation}
\bm{s}_\theta(\bm{x}_t, t) = \frac{-\bm{\epsilon}_\theta(\bm{x}_t)}{\sqrt{1-\bar{\alpha}_t}}.
\end{equation}
This equation highlights how the output of the denoiser is linked to the gradient of the log likelihood of the observed data. By establishing this connection, we can view \eqref{eq:mean_p_x0_xt} as an unconditional estimation informed by the prior implicit in the denoiser. In Section \ref{sec:diff_inv_prob}, we show how this unconditional sample generation can be utilized to solve inverse problems, where we aim to obtain samples from high-probability regions that are also consistent with our measurements.

\subsection{Posterior sampling for inverse problems}
\label{sec:diff_inv_prob}
An inverse problem is the recovery of an unknown signal $\bm{x}_0 \sim p(\bm{x}_0)$ from a set of measurements:
\begin{equation}
    \bm{y} = \mathcal{A}(\bm{x}_0) + \bm{\eta}, \quad \bm{\eta} \sim \mathcal{N}(\bm{0}, \sigma_n^2 \bm{I}),
    \label{eq:fwd_model}
\end{equation}
where $\bm{y}$ represents measurements,  $\mathcal{A}(\cdot)$ represents a forward operator or a measurement matrix, and $\sigma_n$ represents noise. The maximum a posteriori (MAP) ${\bm{x}}_\text{MAP}$ estimate can be found by maximizing the sum of the log likelihood $\log p(\bm{y}|\bm{x}_0)$ and the log of the data prior distribution $\log p(\bm{x}_0)$. From equation \eqref{eq:fwd_model}, it is evident that $p(\bm{y}|\bm{x}_0)$ is a Gaussian distribution. However, identifying a suitable prior $p(\bm{x}_0)$ remains an active area of research.
Pre-trained Diffusion models are suitable candidates as they learn the log prior density $\nabla_{\bm{x}} \log p(\bm{x})$. 
{To utilize them for solving inverse problems, we build upon the unconditional denoising prediction in \eqref{eq:mean_p_x0_xt} to develop a measurement-conditioned sampling procedure.
} This involves incorporating a conditional score in the reverse sampling process as
\begin{align}
\label{eq:map_diffusion}
    \nabla_{\bm{x}_t} \log p(\bm{x}_t|\bm{y}) = \nabla_{\bm{x}_t}\left(\log p(\bm{x}_t) + \log p(\bm{y}|\bm{x}_t)\right).
\end{align}
The term $\nabla_{\bm{x}_t} \log p(\bm{x}_t)$ in \eqref{eq:map_diffusion} can be approximated using a pre-trained diffusion model $\bm{s}_\theta (\bm{x}_t,t)$. However, as noted in \cite{song2022pseudoinverse, chung2023diffusion}, $p(\bm{y}|\bm{x}_t)$ is intractable. This issue is discussed in DPS \cite{chung2023diffusion} and $\Pi$GDM \cite{song2022pseudoinverse}, where it is highlighted that there is no analytical form available for $p(\bm{y}|\bm{x}_t)$. 
To understand why this is the case, let us consider
\begin{equation}
\label{eq:marg_y_xt}    
p(\bm{y}|\bm{x}_t) = \int p(\bm{y}|\bm{x}_0,\bm{x}_t) p(\bm{x}_0|\bm{x}_t) \, d\bm{x}_0.
\end{equation}

Since $\bm{x}_t$ is obtained from $\bm{x}_0$ by corrupting it with noise, once we know $\bm{x}_0$, knowing $\bm{x}_t$ does not provide any additional information about $\bm{y}$. For this reason, we can write the conditional as
$
p(\bm{y}|\bm{x}_t) = \int_{\bm{x}_0} p(\bm{y}|\bm{x}_0) p(\bm{x}_0|\bm{x}_t) \, d\bm{x}_0.
$
The main challenge then lies in modeling $p(\bm{x}_0|\bm{x}_t)$. Using the reverse diffusion sampling process, we can approximate and sample from this distribution. However, marginalizing over $\bm{x}_0$ is infeasible, as generating even a single sample requires performing a complete reverse ancestral sampling. 
To circumvent this, DPS\cite{chung2023diffusion} proposed to approximate the conditional score $\bm{\kappa}_t$ as
\begin{align}
    \label{eq:dps_approx}
    \bm{\kappa}_t = \nabla_{\bm{x}_t}\log p(\bm{y}|\bm{x}_t) \simeq 
    \nabla_{\bm{x}_t}\log p(\bm{y}|\hat{\bm{x}}_0),
\end{align} 
where $\hat{\bm{x}}_0$ is given by \eqref{eq:mean_p_x0_xt} and $ p(\bm{y}|\hat{\bm{x}}_0) \simeq \mathcal{N}(\mathcal{A}(\hat{\bm{x}}_0), \sigma^2_n \bm{I} )$. 
Alternatively, $\Pi$GDM \cite{song2022pseudoinverse} assumes $p(\bm{x}_0|\bm{x}_t) \sim \mathcal{N}(\hat{\bm{x}}_0, r_t^2 \bm{I})$, where $r_t$ is a time- and data-dependent hyper-parameter, and approximates the conditional score as
\begin{align}
    \label{eq:pigdm_approx}
    \bm{\kappa}_t = \nabla_{\bm{x}_t}\log p(\bm{y}|\bm{x}_t) \simeq 
    \left ( 
    (\bm{y} - \bm{A}\hat{\bm{x}}_0)^\mathsf{H}\bm{\Sigma}_t^{-1} \bm{A} \dfrac{\partial \hat{\bm{x}}_0}{\partial \bm{x}_t}\right)^\mathsf{H},
\end{align}
where $\bm{\Sigma}_t=r_t^2 \bm{A}\bm{A}^\mathsf{H} + \sigma_n^2 \bm{I}$ . Both approximations in DPS and $\Pi$GDM have demonstrated remarkable results on various real-world datasets. However, the approximation in DPS may become inaccurate as $t$ approaches the total number of diffusion steps $T$. Additionally, both methods compute backpropagation through the diffusion network to evaluate $\partial \hat{\bm{x}}_0/\partial \bm{x}_t$, which can significantly increase the time and memory complexity of the sampling process. 

Several recent works have proposed strategies to eliminate the expensive gradient computation through the diffusion network during inverse problem solving. One such approach is DMPS~\cite{meng2022diffusion}, which introduces a closed-form approximation of the conditional likelihood by assuming an uninformative prior on $p(\bm{x}_0)$. This assumption allows the method to avoid backpropagation entirely. Notably, our Gaussian prior also represents DMPS as a special case when $\sigma_0^2 \to \infty$.
DDNM~\cite{wangzero} introduces a zero-shot framework for solving linear inverse problems by leveraging the structure of the forward operator. It decomposes the solution into two orthogonal components: the row space and the null space of the operator $\bm{A}$. The row-space component is obtained directly via the pseudoinverse projection $\bm{A}^\dagger \bm{y}$, ensuring strict data consistency. The null-space component is refined by iteratively denoising with a pretrained diffusion model and projecting the result onto the null space using $(\mathbf{I} - \bm{A}^\dagger \bm{A})\hat{\bm{x}}_0$. However, this method is unstable in the presence of measurement noise. To address this, the authors proposed DDNM+, which includes a range-space correction term. In our experiments, we observed poor performance of DDNM+ on deblurring tasks in the presence of noise, despite the correction term.
ReSample~\cite{songsolving} is another method that leverages pretrained latent diffusion models (LDMs). It avoids gradient computation by enforcing hard data consistency in the latent space through constrained optimization, a process referred to as a ReSample timestep. 
ReSample applies unconditional DDIM updates in the early stages, and resampling is performed every $k$ steps (typically $k=10$).
Despite only applying sampling at certain steps, ReSample has significantly higher runtime. Additionally, it requires both an encoder and a decoder to map data to and from the latent space, which adds complexity and limits its applicability in settings without pretrained auto-encoder models. 

DiffPIR~\cite{zhu2023denoising} follows a plug-and-play strategy and uses pretrained diffusion models as generative denoiser priors. While DiffPIR does not explicitly model the conditional posterior, the sampling steps in DiffPIR closely resemble the steps in DPS and CoDPS (albeit in a different order, as a result of  different assumptions). Our experimental results show that CoDPS provides better and improved results compared to DPS and DiffPIR. Ather representative method is DDRM~\cite{kawar2022denoising}, which performs spectral-domain posterior sampling but relies on SVD of the forward operator and assumes linearity. DiffPIR~\cite{zhu2023denoising} follows a plug-and-play strategy, applying pretrained diffusion models as proximal denoisers without explicitly modeling the posterior. Recent extensions~\cite{peng2024improving} propose covariance estimation modules to enhance conditioning accuracy in such frameworks.

A complementary line of work, referred to as variational inference-based methods, approximates the posterior distribution with a simpler, tractable distribution instead of seeking for a closed-form expression for the conditional score ~\cite{daras2024survey}. RED-Diff~\cite{mardanivariational} proposes to minimize the KL divergence between the true conditional posterior $p(\bm{x}_0 \mid \bm{y})$ and a variational distribution $q(\bm{x}_0 | \bm{y}) = \mathcal{N}(\bm{\mu}, \sigma^2 \bm{I})$. The mean $\bm{\mu}$ is iteratively updated by minimizing the KL objective, while the variance of the variational distribution is assumed to be small $\sigma^2 \approx 0$. While RED-Diff and CoDPS share the goal of modeling the posterior that utilize simple Gaussian forms, the two approaches differ significantly in formulation and implementation. RED-Diff formulates  posterior sampling as a stochastic optimization problem that requires iterative optimization with reverse diffusion sampling in each step. In contrast, CoDPS derives a closed-form correction to the score under a Gaussian prior without any need to perform an iterative optimization problem.

Other approaches, such as \cite{chung2022improving, song2020score, choi2021ilvr, lugmayr2022repaint, murata2023gibbsddrm, kawar2021snips}, utilize pre-trained diffusion models. In this work, we demonstrate the connections between DPS, $\Pi$GDM, and DDS-type conditioning methods, and propose a unified diffusion posterior sampling algorithm. Our approach leverages implicit assumptions present in previous works and enhances computational efficiency by eliminating gradient computation through the network.
While the focus of our work is mainly on unsupervised diffusion based methods, there are several deep learning-based solvers trained in a fully supervised manner \cite{saharia2022image, rombach2022high, whang2022deblurring, xia2023diffir, delbracio2023inversion, yismaw2024domain}.

\section{Method}

We begin this section by discussing the limitations of existing methods for approximating the conditional score for $p(\bm{y}|\bm{x}_t)$. We highlight the differences and explore connections between the implicit assumptions made by these methods. Next, we present our proposed \textbf{Co}variance \textbf{Co}rrected \textbf{D}iffusion \textbf{P}osterior \textbf{S}ampling (\textbf{CoDPS}) method and explain how it can be incorporated into reverse diffusion sampling algorithms such as DDPM \cite{ho2020denoising} and DDIM \cite{song2020denoising}. We describe how our method builds upon assumptions that are already present in previous methods. Finally, we present efficient covariance inversion methods for a broad family of inverse problems.

\begin{figure*}[th]
    \centering
    \includegraphics[width=\textwidth]{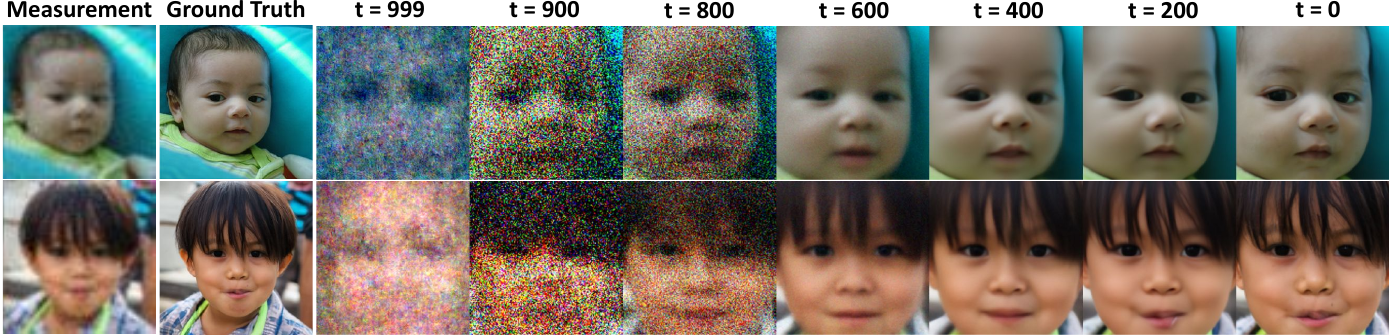}
    \caption{Evolution of $\hat{\bm{x}}_0$ as ${t \to 0}$ for solving a noisy super-resolution problem. For large values of $t$, $\hat{\bm{x}}_0$ appears noisy and does not resemble a natural image. For intermediate values of $t$, the images appear plausible but may diverge from the ground truth in terms of pose, appearance, and other visual details. This highlights that the uncertainty in $\hat{\bm{x}}_0$ is initially high and decreases over time, an effect that should be considered in conditional score estimation.
    }
    \label{fig:X_0_hat_evolution}
\end{figure*}

\subsection{Conditional score approximation}
\label{subsec:prior_approx}
We revisit the conditional score approximation from DPS \cite{chung2023diffusion}, stated as $p(\bm{y}|\bm{x}_t) \approx \mathcal{N}(\bm{A}\hat{\bm{x}}_0, \sigma_n^2 \bm{I} )$ for a linear inverse problem under additive Gaussian measurement noise, where $\hat{\bm{x}}_0$ is the quantity given in \eqref{eq:mean_p_x0_xt}. Intuitively, this approximation appears to be valid for small values of $t$ as the diffusion model learns the true data distribution $p(\bm{x}_0)$ and the clean image estimates $\hat{\bm{x}}_0$ will accurately represent samples from this distribution. However, at the early stages of the reverse diffusion process $t \approx T$, the samples $\bm{x}_t$ will be close to an isotropic Gaussian distribution. The clean image estimates $\hat{\bm{x}}_0$ from $\bm{x}_t$ at these stages will not necessarily come from the true prior distribution. This can be easily verified by looking at the estimates $\hat{\bm{x}}_0$ for different values of $t$, as shown in Figure \ref{fig:X_0_hat_evolution}. When $ 600 < t < 1000 $, the estimates do not seem to resemble clean natural images (i.e samples from $p(\bm{x}_0)$). Additionally, within the range $ t \in [200,600] $, although the samples appear plausible, they are perceptually different from the ground truth image.
More importantly, the uncertainty in the measurement $\bm{y}$ when conditioned on these estimates grows as $t$ increases and needs to be accounted for in the conditional variance. To see this more clearly, let us consider a simple prior Gaussian distribution given by $\bm{x}_0 \sim \mathcal{N}(0, \sigma_0^2 \bm{I})$. Using the properties of marginal and conditional Gaussian distributions, we obtain 
\begin{align}
    \label{eq:p_x0_xt_approx}
    p(\bm{x}_0|\bm{x}_t) &= \mathcal{N}\left(\bm{\Sigma}_{0|t}\left(\dfrac{\sqrt{\bar\alpha_t}}{1-\bar\alpha_t} \bm{x}_t \right), \bm{\Sigma}_{0|t}\right),
\end{align}
where
\begin{align}
\label{eq:var_x0_xt_approx}
\bm{\Sigma}_{0|t} = \dfrac{\sigma_0^{2}(1-\bar\alpha_t)}{(1-\bar\alpha_t) + \sigma_0^{2}\bar\alpha_t} \bm{I} = \sigma^2_{0|t}\bm{I}.
\end{align}
As $t \to 0$, $\bar{\alpha}_t \to 1 $  and $p(\bm{x}_0|\bm{x}_t) \to  \delta\left(\bm{x}_0 - \frac{1}{\sqrt{\bar{\alpha}_t}}\bm{x}_t\right)
$ (i.e., a delta distribution at the scaled noisy estimate $\bm{x}_t$). Similarly, as $t \to T$, where $T$ is the number of diffusion steps, $\bar{\alpha}_t \to 0 $ and $p(\bm{x}_0|\bm{x}_t) \to \mathcal{N}(0, \sigma_0^2 \bm{I})$, which is the prior data distribution. Note that the posterior mean provided by \eqref{eq:p_x0_xt_approx} matches the posterior mean obtained through Tweedie’s identity in \eqref{eq:mean_p_x0_xt} (see more details in  Appendix \ref{subsec:posterior_mean_adx}). 

In contrast, $\Pi$GDM \cite{song2022pseudoinverse} assumes $p(\bm{x}_0|\bm{x}_t) \sim \mathcal{N}(\hat{\bm{x}}_0, r_t^2 \bm{I})$, which leads to the approximation in \eqref{eq:pigdm_approx}. This approximation is equivalent to DPS up to a scalar constant for cases where $\sigma_n^2 \gg r_t$, or for the cases where the forward operator has orthogonal rows ( i.e $\bm{A}\bm{A^\mathsf{H}} = \bm{I}$). Furthermore, by applying Theorem \ref{theorem:cond_gauss}, we can draw a direct connection between the underlying assumptions made by $\Pi$GDM and DDS. In particular, we show that the conditional assumption in $\Pi$GDM is equivalent to the Gaussian assumption in DDS.

\begin{restatable}[Conditionally specified Gaussian distributions \cite{arnold1999conditional}]{theorem}{condgauss}
\label{theorem:cond_gauss}
. Let $X$ and $Y$ be random variables, and let $p(X|Y)$ and $p(Y|X)$ be both Gaussian distributions. Then, the joint distribution $f_{X,Y}(x,y)$ is a bivariate normal distribution if either $var(Y|X)$ or $var(X|Y)$ is constant.
\end{restatable}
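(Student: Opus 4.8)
The plan is to exploit that taking logarithms turns the two Gaussian conditional assumptions into polynomial statements, and then to use the constant-variance hypothesis together with integrability to force the log-density into a negative-definite quadratic form. Write $g(x,y) = \log f_{X,Y}(x,y)$ and assume, as is standard for conditionally specified normal models, that the joint density is strictly positive on $\mathbb{R}^2$. Because $p(x \mid y)$ is Gaussian in $x$ for every $y$, we have $g(x,y) = \log p(x \mid y) + \log f_Y(y)$, a quadratic polynomial in $x$ with $y$-dependent coefficients; symmetrically, $g$ is quadratic in $y$ for every fixed $x$. The first step is to upgrade these two one-variable statements into a statement about the bivariate form of $g$.

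To do this, I would differentiate: quadraticity in $x$ gives $\partial_x^3 g \equiv 0$, and quadraticity in $y$ gives $\partial_y^3 g \equiv 0$. Writing $g(x,y) = a_0(y) + a_1(y)x + a_2(y)x^2$ from the first condition and then imposing $\partial_y^3 g = 0$ (which must hold identically in $x$, forcing $a_0''' = a_1''' = a_2''' = 0$), I conclude that each $a_i$ is a polynomial of degree at most two. Hence
\[
g(x,y) = \sum_{i=0}^{2}\sum_{j=0}^{2} c_{ij}\, x^i y^j,
\]
the general form of a conditionally-normal log-density. Reading off the coefficient of $x^2$ shows that $-1/\big(2\,\mathrm{var}(X \mid Y = y)\big) = c_{20} + c_{21}y + c_{22}y^2$, and likewise $-1/\big(2\,\mathrm{var}(Y \mid X = x)\big) = c_{02} + c_{12}x + c_{22}x^2$.

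The remaining task is to kill the three excess coefficients $c_{22}, c_{21}, c_{12}$ that distinguish the general conditionally-normal family from the bivariate normal. Suppose $\mathrm{var}(X \mid Y)$ is constant (the other case is symmetric). Then $c_{20} + c_{21}y + c_{22}y^2$ is constant in $y$, which immediately gives $c_{21} = c_{22} = 0$. This reduces the $y^2$-coefficient to $c_{02} + c_{12}x$, so $\mathrm{var}(Y \mid X = x) = -1/\big(2(c_{02}+c_{12}x)\big)$. Here I invoke integrability: for $p(y \mid x)$ to be a genuine normal density for every $x \in \mathbb{R}$, its variance must be finite and strictly positive, so $c_{02}+c_{12}x < 0$ for all $x$; a nonconstant affine function cannot stay negative on all of $\mathbb{R}$, forcing $c_{12}=0$ and $c_{02}<0$. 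With $c_{21}=c_{22}=c_{12}=0$, the log-density is a pure quadratic form, and the same integrability argument applied in both variables makes that form negative definite; exponentiating yields a bivariate normal density.

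I expect the main obstacle to be the bookkeeping in the second step, namely justifying that ``quadratic in each variable separately'' really forces the finite bivariate polynomial expansion rather than allowing exotic coefficient functions, and being careful that the full-support assumption underlying the $c_{12}=0$ integrability argument is warranted. The constant-variance hypothesis does the clean algebraic work of removing $c_{21}$ and $c_{22}$, but it is the positivity and integrability of the second conditional that removes the last cross term $c_{12}$; keeping straight which hypothesis eliminates which coefficient is the delicate part.
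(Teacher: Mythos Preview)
Your proposal is correct and lands on the same logical skeleton as the paper, but the key technical step is obtained by a different and more elementary tool. Both arguments arrive at the representation
\[
\log f_{X,Y}(x,y)=\sum_{i=0}^{2}\sum_{j=0}^{2} c_{ij}\,x^{i}y^{j},
\]
then use the constant-variance hypothesis to kill two of the three offending coefficients $c_{22},c_{21},c_{12}$ and positivity of the \emph{other} conditional variance on all of $\mathbb{R}$ to kill the last; the paper's constants $E,F,J$ are exactly your $c_{12},c_{22},c_{21}$ (the paper happens to take $\mathrm{var}(Y\mid X)$ constant while you take $\mathrm{var}(X\mid Y)$ constant, which is the symmetric case you note). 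Where the two proofs differ is in how the bilinear polynomial form is established. The paper writes the Bayes identity $f_{Y\mid X}f_X=f_{X\mid Y}f_Y$ as a functional equation of the type $\sum_k f_k(x)g_k(y)=0$ and invokes Acz\'el's structure theorem for such equations to parametrize the solution space, then reads off the coefficients from the resulting matrix relations. You instead differentiate: $\partial_x^3 g=\partial_y^3 g=0$ forces each coefficient function $a_i(y)$ to be a quadratic polynomial. Your route is shorter and self-contained; the paper's route is purely algebraic and avoids the smoothness of $\mu_{X\mid Y},\sigma_{X\mid Y}^2$ that your third derivatives tacitly require. The smoothness concern you flag is easily bypassed without differentiation by interpolation: fix three distinct $x$-values, solve the $3\times 3$ Vandermonde system for $a_0,a_1,a_2$ as linear combinations of $g(x_i,\cdot)$, and observe that each $g(x_i,\cdot)$ is quadratic in $y$ by the second conditional assumption.
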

\noindent
We provide the proof of Theorem \ref{theorem:cond_gauss} in Appendix \ref{subsec:proof_theo_1}. The forward diffusion in \eqref{eq:fwd_q_xt_x0} ensures that $p(\bm{x}_t| \bm{x_0})$ is a normal distribution. We note that for a given $\bm{x}_t$, $r_t$ in \eqref{eq:pigdm_approx} is a constant. Theorem~\ref{theorem:cond_gauss} further shows that the joint distribution $p(\bm{x}_0, \bm{x}_t)$ is a bivariate normal distribution. 
From this we can easily deduce that the marginals $p(\bm{x}_0)$ and $p(\bm{x}_t)$ are also Gaussian. Following the arguments above, we can conclude that the Gaussian assumption on $p(\bm{x}_0|\bm{x}_t)$ implies that $p(\bm{x}_0)$ itself is a Gaussian. 

In this paper, we demonstrate that assuming the specific Gaussian distribution for $p(\bm{x}_0)$ allows for incorporating a covariance correction into the conditional likelihood score. It also enables the approximation of the Jacobian, as shown in DDS \cite{chung2024decomposed}. These techniques yield a unified posterior sampling method that enhances reconstruction performance while improving time and memory efficiency.

\subsection{Covariance corrected
posterior sampling}

Our proposed posterior sampling process, described in Algorithm~\ref{alg:co_dps_ddim}, involves iterative steps to enforce prior and measurement likelihood consistency. The prior consistency step uses a diffusion model $\bm{s}_\theta$, pre-trained to learn the score function (i.e., the gradient of the log prior), to obtain estimates $\bm{x}_t$, as outlined in lines 3-5. Specifically, we apply the DDIM sampling scheme \cite{song2020denoising} to predict the denoised estimate $\bm{\hat{x}}_0$ in line 5. Based on the time index $i$, we extract time-step $t$ and noise-schedule values $\alpha_t$, which will be used to compute the DDIM coefficients $c_1$ and $c_2$. These coefficients are then used to compute the estimates for the subsequent iterations in lines 6 and 7. 
{Finally, we update our latest estimate by moving in the direction of the gradient of the log of the measurement likelihood (line 8). However, we note that computing the gradient with respect to $\bm{x}_t$ is computationally expensive because it requires backpropagation through the diffusion model. }

\begin{algorithm}[t]
    \caption{Covariance Corrected
Diffusion Posterior Sampling (CoDPS)}
   \label{alg:co_dps_ddim}
    \begin{algorithmic}[1]
     \REQUIRE $N$, $\bm{y}$, \{$\zeta_t\}_{t=1}^N,  {\{\tilde\sigma_t\}_{t=1}^N}$, $\sigma_0^2$
     \STATE $\bm{x}_N \sim \mathcal{N}(\bm{0}, \bm{I})$
      \FOR{$i=N-1$ {\bfseries to} $0$}
       
           \STATE{$t, \bar{\alpha}_t, \bar{\alpha}_{t-1} \gets \text{extract\_alpha\_t}(i)$ \hfill \textit{$\triangleright$ Get alpha values}}
    
         \STATE{{$\hat{\bm{s}} \gets \bm{s}_\theta(\bm{x}_t, t)$}}
         
         \STATE{{$\hat{\bm{x}}_0 \gets 
         \dfrac{1}{\sqrt{\bar\alpha_t}}(\bm{x}_t + (1 - \bar\alpha_t)\hat{\bm{s}})$}}
         
         \STATE{$\bm{z} \sim \mathcal{N}(\bm{0}, \bm{I})$}
         
         \STATE{$\bar{\bm{x}}_{t} \gets \sqrt{\bar{\alpha}_{t-1}}\hat{\bm{x}}_0 + c_1\bm{z} + c_2 \hat{\bm{s}} $ \hfill \textit{$\triangleright$ DDIM coefficients}}
         \STATE{
         $\bm{x}_{t-1} \gets \bar{\bm{x}}_{t} + \zeta_t \bm{\kappa}_t$ \hfill $\triangleright$  \textit{Data consistency update} \eqref{eq:our_approx}
         }
      \ENDFOR
      \STATE {\bfseries return} $\hat{\bm{x}}_0$
    \end{algorithmic}
\end{algorithm}

To address this challenge, we propose to approximate the score of the conditional likelihood, $\bm{\kappa}_t = \nabla_{\bm{x}_t} \log p(\bm{y}|\bm{x}_t)$, as 
\begin{equation}
    \label{eq:our_approx}
     \bm{\kappa}_t  \simeq \gamma \bm{A}^\mathsf{H}
        (\sigma_n^2 \bm{I}  + \bm{A}\bm{\Sigma}_{0|t}\bm{A}^\mathsf{H} )^{-1} (\bm{y} - \bm{A}\hat{\bm{x}}_0),
\end{equation}
where $\gamma =   \left(\frac{1}{{\sqrt{\bar{\alpha}}}_t} -  \dfrac{1 - {\bar{\alpha}}_t}{{\sqrt{\bar{\alpha}}_t}(1-\bar\alpha_t + \sigma_0^{2} \bar\alpha_t)}\right) \approx \frac{\partial \hat{\bm{x}}_0}{\partial \bm{x}_t}$, $\bm{\Sigma}_{0|t} = \sigma_{0|t}^2$ as given in \eqref{eq:var_x0_xt_approx}, and $\sigma_0^2 = \mathbb{E}[\bm{x}_0^2]$ is the prior variance that we treat as a hyper-parameter. 
The score of the conditional likelihood comes from the assumption that $p(\bm{x}_0) \sim \mathcal{N}(0, \sigma_0^2 \bm{I})$; therefore, the likelihood $p(\bm{y}|\bm{x}_t)$ can be written as
\begin{equation}
\label{eq:our_pyx_est}
p(\bm{y}|\bm{x}_t) \approx \mathcal{N}(\bm{A}\hat{\bm{x}}_{0}, \sigma_n^2 \bm{I} + \bm{A}\bm{\Sigma}_{0|t}\bm{A}^\mathsf{H}).
\end{equation}
%
%
Our proposed score function in \eqref{eq:our_approx} is directly obtained from \eqref{eq:our_pyx_est} (see the derivation in Appendix \ref{sec:cond_score_est}). While $\bm{\kappa}_t$ in \eqref{eq:our_approx} shares certain similarities with \eqref{eq:pigdm_approx} proposed by $\Pi$GDM, \textbf{a key distinction is that our method avoids the computationally intensive Jacobian term ${\partial \hat{\bm{x}}_0}/{\partial \bm{x}_t}$.} We avoid this expensive computation as \textbf{a direct consequence of our assumption on $p(\bm{x}_0)$}, as shown in the following derivation:
\begin{align}
    \notag
     \dfrac{\partial \hat{\bm{x}}_0}{\partial \bm{x}_t} &=\dfrac{\partial}{\partial \bm{x}_t} \left( \frac{1}{{\sqrt{\bar{\alpha}}}_t} \left( \bm{x}_t - \sqrt{1-\bar\alpha_t}\bm{\epsilon}_\theta(\bm{x}_t) \right) \right)
     \\
     \notag
     &\stackrel{(i)}{=} \frac{1}{{\sqrt{\bar{\alpha}}}_t} + \frac{(1 - {\bar{\alpha}}_t)}{{\sqrt{\bar{\alpha}}_t}}  \dfrac{\partial^2}{\partial \bm{x}_t^2} \log p(\bm{x}_t)
     \\
     \label{eq:dx_0_hat_d_xt}
       &\stackrel{(ii)}{=}  \frac{1}{{\sqrt{\bar{\alpha}}}_t} -  \dfrac{1 - {\bar{\alpha}}_t}{{\sqrt{\bar{\alpha}}_t}(1-\bar\alpha_t + \sigma_0^{2} \bar\alpha_t)}.
\end{align}
The first equality $(i)$ comes from the fact that 
\[
\bm{s}_\theta(\bm{x}_t) = \nabla_{x_t} \log p(\bm{x}_t) = {- \bm{\epsilon}_{\theta}(\bm{x}_t)}/{\sqrt{1-\bar{\alpha}_t}}.
\]
The second equality $(ii)$ similarly comes from the Gaussian prior assumption and the second order derivative of log of its density function will be the negative inverse of the variance of $\bm{x}_t$. If we take limit $\sigma_0^2 \to \infty$ in \eqref{eq:dx_0_hat_d_xt}, we readily obtain 
\[
\dfrac{\partial \hat{\bm{x}}_0}{\partial \bm{x}_t} = \dfrac{1}{\sqrt{\bar{\alpha}}_t},
\]
which is equivalent to the expression provided in DDS \cite{chung2024decomposed}. {Similarly, for the noiseless case, our conditional score correction becomes $$
\bm{\kappa} \propto \bm{A}^\mathsf{H} ( \bm{A}\bm{A}^\mathsf{H} )^{-1} (\bm{y} - \bm{A}\hat{\bm{x}}_0) = \bm{A}^\dagger(\bm{y} - \bm{A}\hat{\bm{x}}_0).$$
This expression effectively projects the residual into the row space of $\bm{A}$, similar to DDNM~\cite{wangzero}, which explicitly reconstructs the row-space component using $\bm{A}^\dagger \bm{y}$ and refines the null-space using the denoised estimate $\hat{\bm{x}}_0$.}

As we discussed in section \ref{subsec:prior_approx}, many of the existing methods make an implicit assumption that $p(\bm{x}_0)$ is Gaussian. In this work, we make this assumption explicit and use it to find a tractable and simple model for $p(\bm{y}|\bm{x}_t)$. 

\begin{table}[tbp]
\centering
\caption{Quantitative results for noisy inverse problems on the \textbf{FFHQ} dataset. For each problem, our proposed method delivers the best results in either reconstruction or perceptual quality metrics. (\textbf{Bold} and \underline{underline} indicate the best and second-best results, respectively).}
\resizebox{\textwidth}{!}{
\begin{tabular}{llllllllllllll}
\toprule

{} && \multicolumn{4}{c}{\textbf{Deblur (Gaussian)}} & \multicolumn{4}{c}{\textbf{Deblur (motion)}} & \multicolumn{4}{c}{\textbf{SR ($\times 4$)}} \\

\cmidrule(lr){3-6}
\cmidrule(lr){7-10}
\cmidrule(lr){11-14}

{\textbf{Method}} & {\textbf{NFE}} & 
{PSNR $\uparrow$} & {SSIM $\uparrow$} & {FID $\downarrow$} & {LPIPS $\downarrow$} & 
{PSNR $\uparrow$} & {SSIM $\uparrow$} & {FID $\downarrow$} & {LPIPS $\downarrow$} & 
{PSNR $\uparrow$} & {SSIM $\uparrow$} & {FID $\downarrow$} & {LPIPS $\downarrow$} \\

\midrule

{\thead[l]{DDS \cite{chung2024decomposed}}} 
& {100} 
&  {27.14}  & {0.777}   & {49.73}  & {0.300}    
&   {20.48}  &  {0.542} & {87.53} &  {0.443}
& {25.63}
&  {0.709} & {66.36}   &  {0.373}   \\

{\thead[l]{ReSample \cite{songsolving}}} 
& {100} 
&  {25.14}  &  {0.615}   &  {57.92} &   {0.378}    
& {23.75}   & {0.513} &{0.697}   & {0.474}
& {19.33} & {0.292}  & {143.4}   & {0.627}    \\


\thead[l]{$\Pi$GDM \cite{song2022pseudoinverse}} 
& 100 
&  20.02 & 0.447  & 104.81 & 0.560     
& \underline{27.41}  & \underline{0.782} &  35.00 & 0.251  
& 23.79  & 0.598 & 69.82  & 0.392    \\

\thead[l]{DMPS \cite{meng2022diffusion}} 
& 1000 
&  25.55 & 0.724  & 28.76 &  0.259    
& - & -  &  -  & -  
& 26.33  & \underline{0.754} & \textbf{26.55} & 0.247       \\

\thead[l]{DPS \cite{chung2023diffusion}} 
& 1000
& {26.12}& {0.748} & \textbf{26.26} &\underline{0.237} 
& 23.96 & 0.678 & 29.82 & 0.286
& 25.08  & 0.710 &  29.52 & 0.270   \\

\thead[l]{DiffPIR 
\cite{zhu2023denoising}} 
& 100
& 24.58 & 0.674 & 29.87 & 0.298
& {26.92}  & {0.757}  & \textbf{25.38} & {0.254}  
& 22.96  & 0.666 &  46.83 &0.357  \\

\thead[l]{DDRM~\cite{kawar2022denoising}} 
& 20 
& 25.90  & 0.741 &  57.88 & 0.303
& - & - & - & -
& \textbf{26.47}  & \textbf{0.761} & 56.41 & 0.299   \\

\thead[l]{MCG~\cite{chung2022improving}}
&  1000 
& 6.72  & 0.051 & 101.2 & 0.340
& 6.72  & 0.055 & 310.5 &  0.702
& 20.05  & 0.559 & 87.64  & 0.520   \\

\thead[l]{PnP-ADMM~\cite{chan2016plug}}
& - 
&  23.58 & 0.684 & 94.25 & 0.418
& 23.43  & 0.669 & 87.23  & 0.450 
& 22.14  & 0.592 & 135.24 & 0.530   \\

\thead[l]{ADMM-TV~\cite{goldstein2009split}} 
& - 
&  24.63 & 0.721 & 101.83 & 0.460
& 21.95  & 0.655 & 158.43 & 0.519
& 21.14  & 0.618 &  258.06 & 0.580   \\

\cmidrule(l){1-14}

\textbf{\thead[l]{CoDPS(Ours)}} 
& 1000 
& \textbf{27.75} & \textbf{0.802} & \underline{26.54} &\textbf{0.222} 
& \textbf{27.47} & \textbf{0.787} &  {29.26} &\underline{0.245}
& \underline{26.34} & 0.751 &   \underline{27.16} &\underline{0.249}  \\

\textbf{\thead[l]{CoDPS(Ours)}} 
& 100 
& \underline{27.56} & \underline{0.799} & 33.13 &{0.239} 
& {27.15} & {0.771} &  \underline{26.00} &\textbf{0.243}
& {26.01 } & 0.742 &   
39.74 & {0.289 }  \\

\bottomrule
\end{tabular}
}
\label{tab:re_ffhq_sr_deblur}

\end{table}

\subsection{Covariance matrix inversion}
\label{subsec:eff_cov_inv}
Our algorithm requires inverting the covariance matrix of the conditional distribution $p(\bm{y}|\bm{x}_t)$, given as $(\sigma_n^2 \bm{I} + \sigma^2_{0|t}\bm{A}\bm{A}^\mathsf{H})$. While inverting such a large covariance matrix directly is not always feasible, below we discuss some efficient methods to invert this matrix for a number of inverse problems. Note that for a forward operator with orthogonal rows, we can write $\bm{A}\bm{A}^\mathsf{H}$ as an identity matrix and simplify the inversion. The details and derivation for each problem can be found in Appendix \ref{sec:adx_cov_inv}.

\noindent\textbf{Inpainting.}
For inpainting, we can write the forward model as 
$\bm{y} = \mathcal{A}(\bm{x}) + \bm{\eta} = \bm{M} \odot \bm{x} + \eta,$
where $\bm{M}$ denotes an inpainting mask with ones and zeros for observed and unobserved pixels, respectively. The conditional score can be written as 

\begin{equation*}
 \bm{\kappa}_t = \gamma \bm{M} \odot \left(\frac{ \bm{y} - \bm{M}\odot \hat{\bm{x}}_0}{\sigma_n^2 \bm{1} + \sigma^2_{0|t}\bm{M} }\right).
\end{equation*}

We also note that the division is element-wise and $\bm{1}$ is a matrix of all ones with the same size as $\bm{M}$. 

\noindent\textbf{Deblurring.} For deblurring we can write the forward model as 
$
 \bm{y} = \bm{h} \circledast \bm{x} + \bm{\eta} = \bm{H}\bm{x} + \bm{n}.
$
We assume the blurring matrix $\bm{H}$ is the matrix representation of the cyclic convolution $ \bm{h} \circledast \bm{x}$. Under this assumption, we can represent $H$ as a doubly blocked block circulant matrix (BCCB). 

The conditional score can be written as 
\begin{equation*}
 \bm{\kappa}_t = \gamma \bm{F}^\mathsf{H}\bm{\Lambda}^\mathsf{H}\left(\frac{1}{\sigma_n^2}\bm{I} + \frac{1}{\sigma^2_{0|t}|\bm{\Lambda}|^2} \bm{I}\right)(\bm{F}\bm{y} - \bm{\Lambda} \bm{F}\bm{\hat{x}_0}),
\end{equation*}
where $\bm{F},\bm{F}^\mathsf{H}$ denote the forward and inverse/adjoint 2D Fourier transform operators, respectively; $\bm{\Lambda} = \bm{F}\bm{h}$ represents a vector with the 2D Fourier transform of the blur kernel $\bm{h}$.

\noindent\textbf{Super-resolution.} For super-resolution, we can write the forward model as 
$
 \bm{y} = (\bm{h} \circledast \bm{x})_{\downarrow_d} + \bm{\eta} = \bm{S}\bm{H}\bm{x} + \bm{n},
$
where $\bm{S}$ is the down-sampling matrix with a factor $d$, $\bm{H}$ is the cyclic-convolution matrix obtained from the blur kernel $\bm{h}$.  The down-sampling operator is a decimation matrix that samples the first pixel in every non-overlapping $d \times d$ block of our image. 

The conditional score can be written as 
\begin{equation*}
 \bm{\kappa}_t = \gamma \bm{F}^\mathsf{H}\bm{\Lambda}^\mathsf{H}\bm{F}\bm{S}^\mathsf{H}\left({\frac{1}{\sigma_n^2} \bm{I} + \frac{1}{\sigma^2_{0|t}|\bm{\Gamma}|^2}\bm{I}}\right)(\bm{y} - \bm{S}\bm{H}\bm{\hat{x}_0}),
\end{equation*}
where $\bm{\Gamma}$ is a vector of $m^2$ elements obtained by periodic averaging of $d^2$ entries in the 2D Fourier transform of the blur kernel. 

\noindent\textbf{Separable systems.} Given a separable system with measurements $\bm{y} = \mathcal{A}(\bm{x}) + \bm{\eta}$, we can represent the measurements and forward operator as 
$\bm{Y}$ and $\mathcal{A}(x) = \bm{A}_l\bm{X}\bm{A}_r^\mathsf{H}$. 
$\bm{X} \in \mathbb{R}^{n\times n}$ and  
$\bm{Y} \in \mathbb{R}^{m\times m}$ represent reshaped 2D versions of $\bm{x} \in \mathbb{R}^{n^2}$ and $\bm{y} \in \mathbb{R}^{m^2}$, respectively, 
and $\bm{A}_l, \bm{A}_r \in \mathbb{R}^{m \times n}$ represent the left and right matrices. We can directly write the conditional score \eqref{eq:our_approx} as
\begin{equation*}
 \bm{\kappa}_t = \gamma   \bm{V}_l \bm{\Sigma}_l \left(\frac{\bm{U}_l^\mathsf{H} (\bm{Y} - \bm{A}\hat{\bm{X}}_0) \bm{U}_r}{\sigma_n^2 \bm{1}\bm{1}^\mathsf{H} + \bm{\sigma}^2_{0|t} (\bm{\sigma_l}\bm{\sigma}_r^\mathsf{H})^2 } \right) \bm{\Sigma}_r \bm{U}_r^\mathsf{H}.
\end{equation*}

Here, $\bm{Y}$ and $\hat{\bm{X}}_0$ are the matrix forms of $\bm{y}$ and $\hat{\bm{x}}_0$, respectively. The matrices $\bm{U}_l$, $\bm{\Sigma}_l$, $\bm{V}_l$ and $\bm{U}_r$, $\bm{\Sigma}_r$, $\bm{V}_r$ represent the SVD of $\bm{A}_l$ and $\bm{A}_r$, with diagonal entries $\bm{\sigma}_l$ and $\bm{\sigma}_r$.

\begin{figure}[t]
        \centering
    \includegraphics[width=0.65\columnwidth]{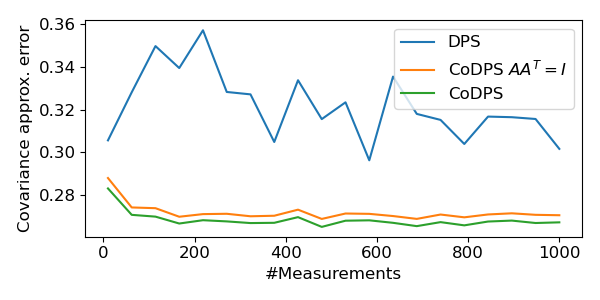}
        \caption{CoDPS provides a more accurate approximation of the correct MAP estimators. A simplified version of CoDPS, assuming $\bm{A}\bm{A}^\mathsf{H} = \bm{I}$, performs comparably to the CoDPS. This shows the effectiveness of our method in cases where the covariance matrix cannot be inverted. }
        \label{fig:cov_approx_err}
\end{figure}
\begin{figure*}[ht]
    \centering
    \includegraphics[width=\textwidth]{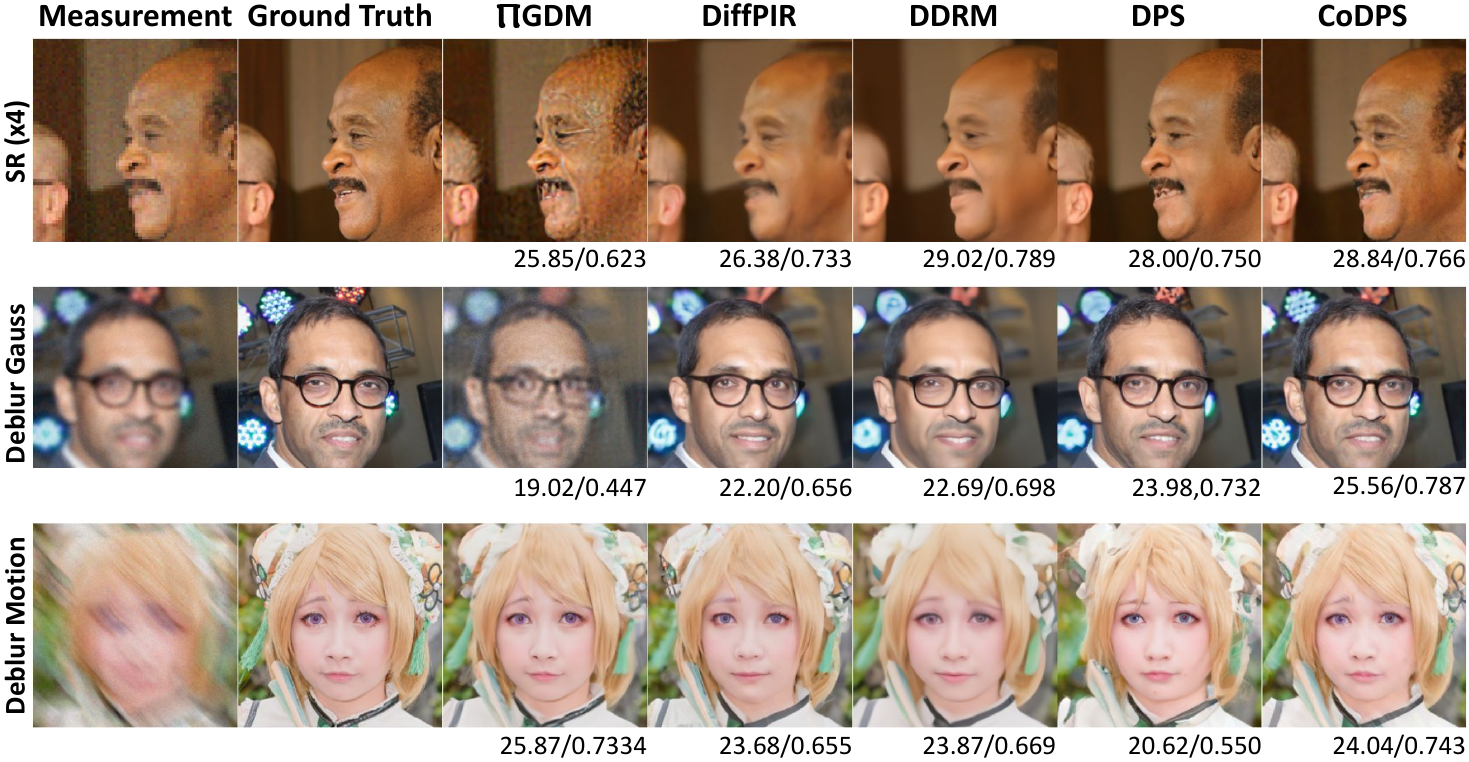}
    \caption{Our method (CoDPS) outperforms most competing approaches both quantitatively and qualitatively. For super-resolution, our method's output recovers facial textures that are missing in the other outputs. For Gaussian and motion deblurring, our method consistently recovers background and foreground details, and produces a subject pose that is more consistent with the ground truth. (PSNR/SSIM values are shown below each image.)
    }
    \label{fig:comp_sr_deblur}
\end{figure*}

\section{Experiments}

\subsection{Experiment on mixture of gaussians}
\label{sebsec:exp_on_mix_gaus}
We first present an experiment for solving an ill-posed inverse problem \eqref{eq:fwd_model} using synthetic data. We define the prior distribution $p(\bm{x}_0)$ using a Gaussian mixture model (GMM), and the forward operator as a random projection matrix $\bm{A} \in \mathbb{R}^{m \times n}$, where $m < n$. First, we use a diffusion network to learn the data prior. Then, we apply our proposed method to solve the inverse reconstruction problem.  The primary objective of this experiment is to highlight the importance of our covariance correction term by comparing it with DPS. We present the details of this experiment in Appendix \ref{sebsec:adx_exp_on_mix_gaus}.

 Our experimental validation indicates that our proposed method yields better reconstruction performance and better approximates the true posterior covariance. Figure \ref{fig:cov_approx_err} shows the approximation error between the true MAP estimator's covariance and the sample covariance estimates. The idea is that the sample covariance of the reconstructed estimations for each cluster in the GMM should approximate and converge to the covariance of the MAP estimator. We observe that both our methods yield better approximation that improves as we obtain more measurements. The simplified CoDPS ($\bm{A}\bm{A}^\mathsf{H}=\bm{I}$) performs better than DPS and is  comparable to the version of our method that uses the correct covariance. Since the simplified version of our method accounts for the uncertainty in estimates $\hat{\bm{x}}_0$ \eqref{eq:var_x0_xt_approx}, we observed an improved result over DPS. 

\subsection{Experiment on natural images. }

We performed image restoration experiments using the FFHQ \cite{karras2019style} and ImageNet \cite{deng2009imagenet} datasets. Our experiments include image super-resolution, random inpainting, and Gaussian and motion deblurring tasks. 
We compare our method with related diffusion-based approaches such as $\Pi$GDM \cite{song2022pseudoinverse}, DDRM \cite{kawar2022denoising}, DiffPIR \cite{zhu2023denoising}, DPS \cite{chung2023diffusion}, DMPS \cite{meng2022diffusion} , MCG \cite{chung2022improving}, {DDS\cite{chung2024decomposed}, DDNM\cite{wangzero} and ReSample\cite{songsolving}}. Additionally, we compare it with ADMM for Total Variation (TV) regularization \cite{rudin1992nonlinear, goldstein2009split} and PnP-ADMM \cite{chan2016plug} using a DnCNN \cite{zhang2017beyond} denoiser. For each of these methods, we used the publicly available source code with the reported hyper-parameters. Further details of the experimental setup, hyper-parameters for our method, as well as for all the comparison methods, are described in Appendix \ref{sec:imp_details}.

\noindent
\textbf{Experimental setup.} We utilized existing pre-trained diffusion models for both datasets, for FFHQ we obtained it from \cite{chung2023diffusion} and for ImageNet from \cite{dhariwal2021diffusion}. Note that both of these network are trained as generative denoisers for their respective datasets and are not fine-tuned for any image restoration task. We use these models for all methods in our comparison set that use diffusion models. Following \cite{chung2023diffusion}, we use the first 1K images of the FFHQ dataset as validation data. We resize the original $1024 \times 1024$ images in this dataset to $256 \times 256$. For ImageNet, we obtain the preprocessed $256 \times 256$ images from \cite{dhariwal2021diffusion} and use the first 1K images as our validation dataset. We use these validation datasets for all comparison experiments. For performance metrics, we report both standard reconstruction metrics, including peak signal-to-noise ratio (PSNR) and structural similarity index (SSIM) \cite{wang2004image}, as well as perceptual metrics, including Fréchet inception distance (FID) \cite{heusel2017gans} and Learned Perceptual Image Patch Similarity (LPIPS) \cite{zhang2018unreasonable}. We used PyTorch \cite{paszke2019pytorch} to implement our proposed method on a single NVIDIA GeForce RTX 2080 Ti GPU with 12GB memory.

\noindent
\textbf{Inverse problems. }
The first inverse problem we consider is image super-resolution. The task here is to recover an image blurred using a $9 \times 9$ Gaussian blur kernel with standard deviation $3.0$ followed by a $\times 4$ down-sampling using a decimation matrix. 
The down-sampling operator is a decimation matrix that samples the first pixel in every non-overlapping $4 \times 4$ block of our image. 
For Gaussian deblurring, we use a $61 \times 61$ kernel with standard deviation $3.0$ and for motion deblurring we randomly generate a kernel\footnote{{https://github.com/LeviBorodenko/motionblur}} with size $61 \times 61$ with intensity of $0.5$. Finally, we consider random image inpainting, where we remove pixels randomly in all color channels. Each pixel can be removed with a uniform probability in the range $[0.7, 0.8]$. For noisy experiments, we apply an additive Gaussian noise with standard deviation $\bm{\sigma}_n=0.05$.

\begin{table*}[t]
\centering
\resizebox{\textwidth}{!}{
\begin{tabular}{llllllllllllll}
\toprule

{} & & \multicolumn{4}{c}{\textbf{SR ($\times 4$)}} 
& 
\multicolumn{4}{c}{\textbf{Deblur (Gaussian)}} & \multicolumn{4}{c}{\textbf{Deblur (motion)}}\\

\cmidrule(lr){3-6}
\cmidrule(lr){7-10}
\cmidrule(lr){11-14}

{\textbf{Method}}  & {\textbf{NFE}}  &

{PSNR $\uparrow$} & {SSIM $\uparrow$} & {FID $\downarrow$} & {LPIPS $\downarrow$} & {PSNR $\uparrow$} & {SSIM $\uparrow$} & {FID $\downarrow$} & {LPIPS $\downarrow$}  & 
{PSNR $\uparrow$} & {SSIM $\uparrow$} & {FID $\downarrow$} & {LPIPS $\downarrow$} \\

\midrule

{\thead[l]{RED-DIFF \cite{mardanivariational}}} 
& {100} 
& {25.97}  & {0.755}  & {39.98}    & { 0.275 }  
&  {26.17}  & {0.746}   &  {45.57} & {0.322}      
&  {29.60}  & {0.862} & {30.30} & {0.218}    \\

{\thead[l]{DDNM \cite{wangzero}}} 
& {100} 
& {26.00}  & {0.741}  & {38.21}    & {0.273}  
&  {24.85}  & {0.661}   &  {46.75} & {0.338}      
&  -  & - &  - & -    \\

\thead[l]{$\Pi$GDM} \cite{song2022pseudoinverse}
& 100 &   \textbf{26.13} & \textbf{0.757} & \textbf{26.92} & \textbf{0.239}

& 24.74 & 0.683 & 30.31 & 0.271   
& \textbf{45.21} & 0.983 & 5.473 &  0.05  \\

\textbf{\thead[l]{CoDPS(Ours)}} 
& 100 &   24.84 & 0.726 & 42.42 & 0.284  
&  \textbf{29.10} &  \textbf{0.875} & \textbf{12.31}  &   \textbf{0.130}
& 43.39 & \textbf{0.992} & \textbf{0.622} &   \textbf{0.01} \\

\bottomrule
\end{tabular}
}
\caption{Quantitative metrics for noiseless inverse problems on the \textbf{ImageNet} dataset. Our proposed method is approximately twice as computationally efficient as $\Pi$GDM and achieves the best performance in deblurring tasks.}
\label{tab:imgnet_pgdm_codps}
\end{table*}

\noindent\textbf{Results. }We present quantitative metrics on the FFHQ dataset in Tables \ref{tab:re_ffhq_sr_deblur} and \ref{tab:re_ffhq_inp}. In both motion and Gaussian deblurring experiments, our methods with 1000 and 100 number of function evaluations (NFEs) outperformed all competing methods across several metrics. With the exception of the FID scores, our methods consistently delivered either the best or second-best performance. The only two exceptions are: DPS achieved the top FID score for Gaussian deblurring, while DiffPIR obtained the best FID score for motion deblurring. These results demonstrate the effectiveness of our approaches in both deblurring tasks.
For $4\times$ super-resolution experiments, our method demonstrates strong performance, achieving a PSNR of 26.34 dB and an SSIM of 0.751. While DDRM achieves a marginally higher PSNR (+0.13 dB), CoDPS demonstrates superior perceptual quality. When compared to DMPS, although CoDPS shows lower perceptual quality, its performance remains highly competitive with comparable PSNR, SSIM and LPIPS metrics.

\begin{wraptable}{r}{0.5\textwidth}
\vspace{-1.5em}
\centering
\caption{Random image inpainting results on \textbf{FFHQ} validation dataset. Our methods outperformed several competing approaches, with the exception of DPS. }
\resizebox{0.5\textwidth}{!}{
\begin{tabular}{lllll}
\toprule

{} & \multicolumn{4}{c}{\textbf{Inpainting (random)}}
\\
\cmidrule(lr){2-5}
{\textbf{Method}} & {PSNR $\uparrow$} & {SSIM $\uparrow$} & {FID $\downarrow$} & {LPIPS $\downarrow$} \\
\midrule
\thead[l]{DPS \cite{chung2023diffusion}} 
&  \textbf{29.12} &  \textbf{0.852} & \textbf{27.81}& \textbf{0.195} \\

{\thead[l]{ReSample 
\cite{songsolving}}}  & {27.78} & {0.795} & {54.52} &  {0.274}\\

{\thead[l]{DDS \cite{chung2024decomposed}}}
& {27.01}  &  {0.743} &  {46.99}  &  {0.325} \\

{\thead[l]{DDNM \cite{wangzero}}} 
&  {\underline{28.69}}   &  {\underline{0.840}}  & {36.84}  &  {\underline{0.218}}   \\

\thead[l]{DMPS \cite{meng2022diffusion}} 
&  21.33  &  0.573  &  112.30 & 0.573  \\
\thead[l]{DDRM~\cite{kawar2022denoising}} 
& 25.26  & 0.761 & 61.57 & 0.295  \\
\thead[l]{MCG~\cite{chung2022improving}}
& 26.59  & 0.778 & \underline{31.94} & 0.231  \\
\thead[l]{PnP-ADMM~\cite{chan2016plug}}
& 23.77  & 0.727 & 57.00 & 0.360 \\
\thead[l]{ADMM-TV~\cite{goldstein2009split}}
&  20.50 & 0.623 & 141.45 & 0.568 \\
\cmidrule(l){1-5}

\textbf{\thead[l]{CoDPS(NFE 1000)}} 
&  {27.56} & {0.802}  &  33.19 & {0.222}
\\
\textbf{\thead[l]{CoDPS(NFE 100)}} 
&  {27.32} &  {0.795}  &  37.87 & {0.236}
\\

\bottomrule
\end{tabular}
}
\label{tab:re_ffhq_inp}
\vspace{-1em}
\end{wraptable} 

We present example outputs for these experiments in Figures~\ref{fig:comp_sr_deblur} . The outputs of our method demonstrate both high visual quality and consistency with the measurements. This behavior is even more pronounced in random image inpainting experiments. From Table \ref{tab:re_ffhq_inp}, we observe that DPS is the best-performing method quantitatively {and DDNM ranks second}. {Our method remains competitive to other methods such as ReSample and DDS}. Our method outputs reconstructions that are more consistent with the ground truth and have better visual quality. Figure \ref{fig:ffhq_inp} shows example outputs from the inpainting experiments. Our method is able to recover fine details, which is easily observed in the zoomed versions. 
In the first image, the background pattern is missing in both DDRM and DPS outputs, while our method successfully recovers it. Similarly, in the second image, the outputs of the other methods are overly smooth and lack details.  

Similar to the results on the FFHQ dataset, our method consistently achieves either the top or second-best performance on the ImageNet dataset across various inverse problems, as shown in Table \ref{tab:results_linear_imgnet_psnrssimlpips}. For Gaussian deblurring, our method surpasses all others in PSNR, FID, and LPIPS. In motion deblurring, CoDPS demonstrates superior perceptual quality. In image super-resolution, our method achieves the second-best performance across all metrics. For random image inpainting, our method is competitive with DPS and outperforms other methods, as shown in Table \ref{tab:results_linear_imgnet_inpaint}. We provide sample outputs in Appendix \ref{subsec:apdx_add_figs}.

{We designed an additional noiseless experiment to compare our method with $\Pi$GDM and DDNM}, and we report the results in Table \ref{tab:imgnet_pgdm_codps}. From the noisy experiments, we observed that both {methods $\Pi$GDM, RED-DIFF and DDNM+} did not perform well. As a result, we designed an experiment that closely matches the setup in the original papers. We used the ImageNet dataset on three tasks: super-resolution, motion deblurring, and Gaussian deblurring. We have also included RED-Diff in this evaluation.
Our method outperformed all competing methods in the Gaussian deblurring experiment, while competing methods performed better in the super-resolution experiment. For motion deblurring, our method achieved the best metrics in terms of SSIM, FID, and LPIPS, while $\Pi$GDM achieved the best PSNR scores. {We did not include motion deblurring results because the current implementation of DDNM does not support it.} Overall, our method produced consistent and improved results in both noiseless and noisy problems.

\begin{wraptable}{r}{0.5\textwidth}
\centering
\caption{Comparison of efficiency metrics for various diffusion-based solvers on SR $(\times 4)$ using the FFHQ dataset, highlighting the number of function evaluations (NFE), wall-clock time in seconds and peak GPU memory usage when performing inference on a single $256\times 256$ RGB image. }
\resizebox{0.5\columnwidth}{!}{
\begin{tabular}{llcc}
\toprule \\

 Method & NFE &  \makecell{\textbf{Wall-clock time}  \\ \textbf{ (Secs/img)}}  $\downarrow$ &  \makecell{\textbf{GPU Memory Req} \\ \textbf{(in GB)} $\downarrow$ }\\

\midrule

{\thead[l]{RED-DIFF}} 
& {1000} 
& {101.8}  & {9.44} \\

{\thead[l]{DDS} }& {1000} & {45.6} & {4.73} \\
{\thead[l]{DiffPIR} }& {1000} & {45.7} & {\textbf{4.22}} \\
{\thead[l]{MCG} }& {1000} & {84.0} & {9.56} \\
{\thead[l]{DDRM} }& {1000} & {54.7} & {4.73} \\

{\thead[l]{ReSample}}
& {100} & {230.4}  &  {\textbf{4.32}} \\
{\thead[l]{DDNM}}
& {1000} & {45.73}  &  {4.73} \\ 
\thead[l]{DPS}
& 1000 &  146.8  & 9.35 \\ 
\thead[l]{DMPS}
& 1000 &  \underline{63.9} & \underline{4.33} \\ 
\thead[l]{$\Pi$GDM}
& 1000 &  81.4 & 9.90  \\

\textbf{\thead[l]{CoDPS(Ours)}} 
& 1000 & \textbf{42.7} & {4.73}   \\

\bottomrule
\end{tabular}
}
\label{tab:performance_comp}
\end{wraptable}

\noindent\textbf{Algorithm efficiency.} Our method is both memory and time efficient compared to related methods such as DPS and $\Pi$GDM. Both these methods need to compute gradients through the diffusion networks, which can increase the cost per NFE by a factor of $2$-$3$ as discussed in \cite{song2022pseudoinverse}. 
In Table \ref{tab:performance_comp}, we report the average wall-clock time per image in seconds and the peak GPU memory requirement during inference for each method. For a consistent NFE of 1000, our method, CoDPS, is approximately $2\times$ faster on average compared to DPS, achieving a wall-clock time of 42.7 seconds per image. 
Additionally, CoDPS shows competitive memory efficiency to DDS, DiffPIR, DMPS and {ReSample}, requiring only 4.73 GB of GPU memory. {However, the runtime of ReSample is substantially worse compared to our method. ReSample uses a latent diffusion-based model, where the latent image size is 64. This provides some advantage in terms of computation, as the diffusion process occurs over a smaller size ($64\times 64$) compared to the pixel space ($256\times 256$). However, the decoding and ReSampling stages can be computationally expensive, which resulted in a significantly longer runtime compared to our method.} Similar to DMPS and DDS, our method avoids gradient computations through the diffusion networks, leading to improved memory efficiency. Moreover, CoDPS significantly outperforms both DPS and $\Pi$GDM in memory efficiency. These findings show the effectiveness of CoDPS in optimizing computational resources and make it suitable for resource constrained settings.

\section{Conclusion}
In this paper, we proposed a unified diffusion posterior sampling method for solving inverse problems that is efficient and more accurate. Our method uses existing implicit and explicit assumptions made in prior works. We leverage measurement covariance correction to improve reconstruction performance, and the prior Gaussian assumption to avoid computing backpropagation through the pre-trained diffusion network.
We demonstrate the performance of our methods in several inverse problems. Additionally, we present efficient techniques for covariance matrix inversion in applications such as image inpainting, image deblurring, super resolution, and separable systems.

\textbf{Acknowledgement.} This paper is partially based on work
supported by the NSF CAREER awards under grants CCF-2043134 and CCF-2046293. 

\bibliographystyle{unsrt}
\bibliography{ref}

\appendix
\section*{\large \bf Appendix: Supplementary Material}

\section{Proofs}
\subsection{Posterior mean}
\label{subsec:posterior_mean_adx}

In this subsection, we show that equation \eqref{eq:p_x0_xt_approx} satisfies the unique posterior mean \eqref{eq:mean_p_x0_xt} as derived from Tweedie’s identity. The prior distribution, $ p(\bm{x}_0) $, is modeled as a Gaussian distribution $ \mathcal{N}(\bm{0}, \sigma_{0}^2 \bm{I}) $. The likelihood distribution $ p(\bm{y}|\bm{x}_0) $ is given by $ \mathcal{N}(\mathcal{A}\bm{x}_0, \sigma_n^2 \bm{I}) $, and the diffusion forward model $ p(\bm{x}_t|\bm{x}_0) $ is represented as $ \mathcal{N}(\sqrt{\bar\alpha_t} \bm{x}_0, (1-\bar\alpha_t) \bm{I}) $. By applying the properties of marginal and conditional Gaussians \cite{bishop2007}, we derive the following distributions:

\begin{align}
    \label{eq:marg_p_xt}
    p(\bm{x}_t) &= \mathcal{N}(0, (1-\bar\alpha_t + \sigma_{x_0}^{2} \bar\alpha_t) \bm{I})
\end{align}
\begin{align}
     \label{eq:apdx_marg_p_xo_xt}
    p(\bm{x}_0|\bm{x}_t) &= \mathcal{N}\left(\bm{\Sigma}_{0|t}\left(\dfrac{\sqrt{\bar\alpha_t}}{1-\bar\alpha_t} \bm{x}_t \right), \bm{\Sigma}_{0|t}\right),
\end{align}
where
\begin{align}
\label{eq:apdx_marg_p_xo_xt_var}
\bm{\Sigma}_{0|t} = \dfrac{\sigma_0^{2}(1-\bar\alpha_t)}{(1-\bar\alpha_t) + \sigma_0^{2}\bar\alpha_t} \bm{I} = \sigma^2_{0|t}\bm{I}.
\end{align}

Then, the posterior mean is given by
\begin{align*}
    \label{eq:mean_p_x0_xt_adx}
    \hat{\bm{x}}_0 &\approx \dfrac{1}{\sqrt{\bar\alpha_t}}\left( \bm{x}_t - \sqrt{1-\bar\alpha_t}\bm{\epsilon}_\theta(\bm{x}_t) \right) \\
    &=  \dfrac{1}{\sqrt{\bar\alpha_t}}\left( \bm{x}_t + {(1-\bar\alpha_t)} \nabla_{x_t} \log p(\bm{x}_t \right)\\
    &= \dfrac{1}{\sqrt{\bar\alpha_t}}\left( \bm{x}_t - {(1-\bar\alpha_t)} \dfrac{\bm{x}_t}{(1-\bar{\alpha}_t+\sigma_{x(0)}^2\bar{\alpha}_t)} \right)\\
    &= \dfrac{1}{\sqrt{\bar\alpha_t}}\left(\dfrac{\sigma_{x(0)}^2\bar{\alpha}_t\bm{x}_t}{(1-\bar{\alpha}_t+\sigma_{x(0)}^2\bar{\alpha}_t)} \right)\\
    &=  \dfrac{\sigma_{x(0)}^2\sqrt{\bar{\alpha}}_t\bm{x}_t}{(1-\bar{\alpha}_t+\sigma_{x(0)}^2\bar{\alpha}_t)} , 
\end{align*}

\noindent which is equivalent to the expectation given in \eqref{eq:apdx_marg_p_xo_xt} after substituting in \eqref{eq:apdx_marg_p_xo_xt_var}.

\subsection{Proof of theorem 1}

\label{subsec:proof_theo_1}
First we will state necessary tools that we will use to prove our theorem. 
\begin{lemma}
\label{lemma:bivariate_g}
    The joint density of a bivariate normal distribution f(x,y) having normal marginals and normal conditionals can be written as 
    \[
f(x, y) = \exp \left( \begin{pmatrix} 1 & x & x^2 \end{pmatrix} 
\underset{=\bm{M}}{
\begin{pmatrix} 
m_{00} & m_{01} & m_{02} \\ 
m_{10} & m_{11} & m_{12} \\ 
m_{20} & m_{21} & m_{22} 
\end{pmatrix} }
\begin{pmatrix} 
1 \\ 
y \\ 
y^2 
\end{pmatrix} \right),
\]
and satisfies the following conditions $m_{21}=m_{12}=m_{22}=0, m_{20}<0$, $m_{02}<0$. If these conditions are met, the statement $m_{11}^2<4m_{02}m_{20}$ will hold if $f(x,y)$ is a valid density function \cite{castillo1989conditional}. 
\begin{proof}
The bivariate normal distribution for random variables $X$ and $Y$ is given by the following probability density function, where $\mu_X$ and $\mu_Y$ are the means, $\sigma_X$ and $\sigma_Y$ are the standard deviations, and $\rho$ is the correlation coefficient between $X$ and $Y$:
\begin{equation}
\label{eq:adx_bv_gauss}
f_{X,Y}(x, y) = \frac{1}{C} \exp \left(-\frac{z}{2(1 - \rho^2)}   \right),
\end{equation}
where
\begin{equation*}
    z =  \left[ \frac{(x - \mu_X)^2}{\sigma_X^2} + \frac{(y - \mu_Y)^2}{\sigma_Y^2} - \frac{2\rho(x - \mu_X)(y - \mu_Y)}{\sigma_X \sigma_Y} \right],
\end{equation*}
and $C = 2 \pi \sigma_X \sigma_Y \sqrt{1 - \rho^2}$. We can rewrite \eqref{eq:adx_bv_gauss} as
\begin{equation*}
f_{X,Y}(x, y) = \exp \left(-\log C -\frac{z}{2(1 - \rho^2)}   \right).
\end{equation*}
We observe that this exponent is quadratic in $x$ and $y$ and we express it as a bi-linear form using a matrix $\bm{M}$. Since the exponent in \eqref{eq:adx_bv_gauss} does not any terms that contain $x^2y^2$,$xy^2$,and $x^2y$, we must have $m_{21}=m_{12}=m_{22}=0$. We can also easily see that $m_{20} = \dfrac{-1}{2(1-\rho^2)\sigma^2_x}$ , $m_{02}= \dfrac{-1}{2(1-\rho^2)\sigma^2_y}$, $m_{11} = \dfrac{-\rho}{(1-\rho^2)\sigma_x\sigma_y} $, and $m_{11}^2 = 4 \rho^2 m_{20}m_{10}$. Since $0 < \rho^2 < 1$, we have $m_{11}^2 < 4 m_{20}m_{10}$. 
\end{proof}

\end{lemma}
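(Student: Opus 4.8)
The plan is to start from the standard bivariate normal density for $f_{X,Y}(x,y)$ and reorganize its exponent into the claimed bilinear form, reading off the entries of $\bm{M}$ by matching coefficients of monomials. First I would write $f_{X,Y}(x,y)=\tfrac{1}{C}\exp\big(-\tfrac{z}{2(1-\rho^2)}\big)$, where $z=\tfrac{(x-\mu_X)^2}{\sigma_X^2}+\tfrac{(y-\mu_Y)^2}{\sigma_Y^2}-\tfrac{2\rho(x-\mu_X)(y-\mu_Y)}{\sigma_X\sigma_Y}$ and $C=2\pi\sigma_X\sigma_Y\sqrt{1-\rho^2}$, then absorb the constant as $-\log C$ inside the exponential. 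The key structural observation is that $\log f$ is a polynomial in $(x,y)$ of total degree exactly two, since expanding $z$ produces only the monomials $1,x,y,x^2,y^2,xy$.

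Next I would expand the target form as $(1,x,x^2)\,\bm{M}\,(1,y,y^2)^{\mathsf H}=\sum_{i,j=0}^{2}m_{ij}\,x^{i}y^{j}$, so that each $m_{ij}$ is exactly the coefficient of $x^i y^j$ in $\log f$. Because $\log f$ has total degree two, all monomials of degree three or more carry zero coefficient; in particular $x^2y$, $xy^2$, and $x^2y^2$ are absent, which forces $m_{21}=m_{12}=m_{22}=0$. Matching the surviving terms then identifies $m_{20}=-\tfrac{1}{2(1-\rho^2)\sigma_X^2}$, $m_{02}=-\tfrac{1}{2(1-\rho^2)\sigma_Y^2}$, and $m_{11}=\pm\tfrac{\rho}{(1-\rho^2)\sigma_X\sigma_Y}$, while the remaining entries $m_{00},m_{10},m_{01}$ absorb the means and the constant $-\log C$. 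Since $\sigma_X^2,\sigma_Y^2>0$ and $1-\rho^2>0$ for any nondegenerate bivariate normal, both $m_{20}<0$ and $m_{02}<0$ follow at once.

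For the remaining bound I would simply compute $m_{11}^2=\tfrac{\rho^2}{(1-\rho^2)^2\sigma_X^2\sigma_Y^2}$ and $4m_{02}m_{20}=\tfrac{1}{(1-\rho^2)^2\sigma_X^2\sigma_Y^2}$, so that $m_{11}^2=\rho^2\,(4m_{02}m_{20})$; validity of the density forces $|\rho|<1$, hence $0\le\rho^2<1$ and therefore the strict inequality $m_{11}^2<4m_{02}m_{20}$. The argument is entirely elementary, so rather than a genuine obstacle the only care-points are (i) justifying that the degree-three and degree-four entries of $\bm{M}$ really vanish, which rests on $\log f$ being exactly quadratic and is precisely what pins down the three zero entries, and (ii) keeping track of the factor $\rho^2$ relating $m_{11}^2$ to $4m_{02}m_{20}$ together with the nondegeneracy constraint $|\rho|<1$, which is what converts the equality into a strict inequality.
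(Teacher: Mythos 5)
Your proposal is correct and follows essentially the same route as the paper's proof: write the standard bivariate normal density, observe the exponent is exactly quadratic so the degree-three and degree-four coefficients $m_{21},m_{12},m_{22}$ vanish, read off $m_{20},m_{02},m_{11}$ by matching monomials, and conclude $m_{11}^2=\rho^2\,(4m_{02}m_{20})<4m_{02}m_{20}$ from $\rho^2<1$. Your version is in fact slightly tidier than the paper's, which contains harmless typos ($m_{10}$ for $m_{02}$ in the final inequality and a sign slip in $m_{11}$, immaterial since only $m_{11}^2$ enters), and you correctly allow $\rho=0$ where the paper needlessly assumes $\rho^2>0$.
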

\begin{theorem}[\cite{aczel1966lectures}]
\label{theorem:aczel}
Let $f_k(x)$ and $g_k(y)$ be functions  and all solutions of the equations

 \begin{equation}
     \sum_{k=1}^{n} f_k(x)g_k(y) = 0
 \end{equation}
can be written in the form 
\begin{equation}
    \begin{pmatrix}
    \label{eq:aczel_1}
f_1(x) \\
f_2(x) \\
\vdots \\
f_n(x)
\end{pmatrix}
=
\begin{pmatrix}
a_{11} & a_{12} & \cdots & a_{1r} \\
a_{21} & a_{22} & \cdots & a_{2r} \\
\vdots & \vdots & \ddots & \vdots \\
a_{n1} & a_{n2} & \cdots & a_{nr}
\end{pmatrix}
\begin{pmatrix}
\varphi_1(x) \\
\varphi_2(x) \\
\vdots \\
\varphi_r(x)
\end{pmatrix}
\end{equation}
and 
\begin{equation}
    \label{eq:aczel_2}
    \begin{pmatrix}
        g_1(y) \\
        g_2(y) \\
        \vdots \\
        g_m(y)
        \end{pmatrix}
        =
        \begin{pmatrix}
        b_{1(r+1)} & b_{1(r+2)} & \cdots & b_{1n} \\
        b_{2(r+1)} & b_{2(r+2)} & \cdots & b_{2n} \\
        \vdots & \vdots & \ddots & \vdots \\
        b_{m(r+1)} & b_{m(r+2)} & \cdots & b_{mn}
        \end{pmatrix}
        \begin{pmatrix}
        \Psi_{r+1}(y) \\
        \Psi_{r+2}(y) \\
        \vdots \\
        \Psi_n(y)
        \end{pmatrix},
        \end{equation}
            
    where $r$ is an integer between $0$ and $n$,  $\varphi_i(x)$ and   $\Psi_j(y)$ are arbitrary independent functions, while the each $a_{ij}$ and $b_{ij}$ are constants that satisfy 
    \begin{equation}
    \label{eq:aczel_3}
        \begin{pmatrix}
    a_{11} & a_{12} & \cdots & a_{1r} \\
    a_{21} & a_{22} & \cdots & a_{2r} \\
    \vdots & \vdots & \ddots & \vdots \\
    a_{n1} & a_{n2} & \cdots & a_{nr}
    \end{pmatrix}
    \begin{pmatrix}
    b_{1(r+1)} & b_{1(r+2)} & \cdots & b_{1n} \\
    b_{2(r+1)} & b_{2(r+2)} & \cdots & b_{2n} \\
    \vdots & \vdots & \ddots & \vdots \\
    b_{n(r+1)} & b_{n(r+2)} & \cdots & b_{nn}
    \end{pmatrix}
    =
    \bm{0}
    \end{equation}
\end{theorem}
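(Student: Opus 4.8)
The plan is to recast the functional equation as a statement about orthogonality of vector-valued functions and then reduce it to pure linear algebra on constant coefficient matrices. I would collect the scalar functions into vectors $\bm{f}(x) = (f_1(x),\dots,f_n(x))^\mathsf{H}$ and $\bm{g}(y) = (g_1(y),\dots,g_n(y))^\mathsf{H}$, so that the hypothesis reads $\bm{f}(x)^\mathsf{H}\bm{g}(y) = 0$ for all admissible $x,y$. The sufficiency of the stated form is immediate: if $\bm{f} = A\bm{\varphi}$ and $\bm{g} = B\bm{\Psi}$ with constant matrices obeying \eqref{eq:aczel_3}, then $\bm{f}(x)^\mathsf{H}\bm{g}(y) = \bm{\varphi}(x)^\mathsf{H} A^\mathsf{H} B\,\bm{\Psi}(y) = 0$, so I would dispose of this direction in one line and devote the argument to necessity.

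For necessity, first I would work inside the space of scalar functions on the common domain and set $r := \operatorname{dim}\operatorname{span}\{f_1,\dots,f_n\}$, which is an integer between $0$ and $n$. Choosing a basis $\varphi_1,\dots,\varphi_r$ of this span lets me write each component as a constant linear combination $f_k = \sum_{i=1}^r a_{ki}\varphi_i$; stacking these coefficients gives the matrix $A=(a_{ki})$ of full column rank $r$, which is precisely \eqref{eq:aczel_1}. This step is elementary, since any element of a finite-dimensional span is a fixed linear combination of a chosen basis.

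The core of the proof is the separation-of-variables step. Substituting $f_k = \sum_i a_{ki}\varphi_i$ into the hypothesis and regrouping yields $\sum_{i=1}^r \varphi_i(x)\bigl(\sum_{k=1}^n a_{ki}\,g_k(y)\bigr) = 0$ for all $x,y$. Fixing $y$ and invoking the linear independence of $\varphi_1,\dots,\varphi_r$ forces every $y$-dependent coefficient to vanish, i.e. $\sum_{k=1}^n a_{ki}\,g_k(y) = 0$ for each $i$ and all $y$, equivalently $A^\mathsf{H}\bm{g}(y) = \bm{0}$. Thus $\bm{g}(y)$ lies in $\ker A^\mathsf{H}$ for every $y$, a subspace of $\mathbb{R}^n$ of dimension $n-r$. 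Selecting a basis of this kernel as the columns of an $n\times(n-r)$ matrix $B=(b_{kj})$ and letting $\Psi_{r+1},\dots,\Psi_n$ be the coordinate functions of $\bm{g}(y)$ in that basis gives $g_k = \sum_j b_{kj}\Psi_j$, which is \eqref{eq:aczel_2}; and because the columns of $B$ span $\ker A^\mathsf{H}$, the orthogonality relation $A^\mathsf{H} B = 0$ of \eqref{eq:aczel_3} holds by construction.

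The main obstacle is precisely this separation-of-variables step: everything hinges on the fact that linear independence of the $\varphi_i$ \emph{as functions}, not merely pointwise, is exactly what licenses concluding that the bracketed $y$-functions vanish identically. I would make sure the notion of linear independence used to define $r$ in the first step coincides with the one used here, and I would record the degenerate endpoints, namely $r=0$ forcing all $f_k\equiv 0$ with $\bm{g}$ unconstrained, and $r=n$ forcing all $g_k\equiv 0$, to confirm the claimed range $0 \le r \le n$ and that matrices $A,B$ of the stated shapes exist in every case.
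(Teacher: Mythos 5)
Your proposal is correct, but it cannot be compared against a proof in the paper for a simple reason: the paper does not prove Theorem~\ref{theorem:aczel} at all. It imports the result verbatim from Acz\'el's lectures as a black-box tool and only applies it (in the proof of Theorem~\ref{theorem:cond_gauss}) to separate the $x$- and $y$-dependent terms of the functional equation. Your argument supplies the standard proof of that imported result, and it is sound: sufficiency is the one-line computation $\bm{\varphi}(x)^\mathsf{H}A^\mathsf{H}B\,\bm{\Psi}(y)=0$; necessity follows by taking $r$ to be the dimension of $\mathrm{span}\{f_1,\dots,f_n\}$, expanding in a basis $\varphi_1,\dots,\varphi_r$, and using linear independence of the $\varphi_i$ \emph{as functions} to conclude $A^\mathsf{H}\bm{g}(y)=\bm{0}$ for every $y$, so that $\bm{g}$ takes values in the $(n-r)$-dimensional kernel of $A^\mathsf{H}$, whose basis furnishes $B$ and the coordinates $\Psi_{r+1},\dots,\Psi_n$, with $A^\mathsf{H}B=\bm{0}$ holding by construction. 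Two remarks on the interface with the paper's statement. First, as printed, \eqref{eq:aczel_3} multiplies an $n\times r$ matrix by an $n\times(n-r)$ matrix, which is not conformable; your version $A^\mathsf{H}B=\bm{0}$ is the correct reading, and it matches how the paper actually uses the condition in the proof of Theorem~\ref{theorem:cond_gauss}, where the orthogonality is written $\bm{B}\bm{C}'=\bm{0}$ (i.e., with a transpose). Second, a small caveat in your necessity step: choosing $B$ as a basis of $\ker A^\mathsf{H}$ guarantees the stated shapes and \eqref{eq:aczel_3}, but the resulting coordinate functions $\Psi_j$ need not be linearly independent (only spanning $\mathrm{span}\{g_k\}$ is forced); conversely, choosing the $\Psi_j$ as a basis of $\mathrm{span}\{g_k\}$ guarantees their independence and still yields $A^\mathsf{H}B=\bm{0}$ because independence of the $\Psi_j$ lets you cancel them from $A^\mathsf{H}B\,\bm{\Psi}(y)=\bm{0}$. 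Either reading satisfies the theorem as stated, since the paper's phrasing of which functions are ``arbitrary independent'' is loose; it would be worth one sentence in your write-up to fix the convention. Your handling of the endpoints $r=0$ and $r=n$ is correct and welcome, as the paper's application implicitly assumes a nondegenerate $r$.
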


Now that we have stated all the necessary theorems, we can restate Theorem \ref{theorem:cond_gauss} and prove it. A more general version of this theorem is stated in Theorem 3.1 of \cite{arnold1999conditional}. Our theorem here is a specific case and we prove it here for completeness.

\condgauss*
\begin{proof}

    We begin by writing the probability density of the conditional distributions $p(X|Y)$ and $p(Y|X)$ as 
    \begin{equation}
       f_{X|Y}(x|y) = \frac{1}{\sqrt{2\pi \sigma_{X|Y}^2}} \exp \left( -\frac{(x - \mu_{X|Y})^2}{2 \sigma_{X|Y}^2} \right)
    \end{equation}
    and \begin{equation}
       f_{Y|X}(y|x) = \frac{1}{\sqrt{2\pi \sigma_{Y|X}^2}} \exp \left( -\frac{(y - \mu_{Y|X})^2}{2 \sigma_{Y|X}^2} \right).
    \end{equation}

    We have the following identity from Bayes' rule $p(Y|X)p(X) = p(X|Y)p(X)$, which gives us the following equality
    \begin{equation}
\frac{f_X(x)}{\sigma_{Y|X}} \exp \left( -\frac{(y - \mu_{Y|X})^2}{2 \sigma_{Y|X}^2} \right) = \frac{f_Y(y)}{\sigma_{X|Y}} \exp \left( -\frac{(x - \mu_{X|Y})^2}{2 \sigma_{X|Y}^2} \right)
    \end{equation}
    By taking log of both sides of the equations, we obtain 
    \begin{equation}
    \label{eq:u_x_v_y}
    u(x) - \frac{(y - \mu_{Y|X})^2}{2 \sigma_{Y|X}^2} = v(y) -  \frac{(x - \mu_{X|Y})^2}{2 \sigma_{X|Y}^2}  ,
    \end{equation}

    where $ u(x) = \log (f_X(x)) - \log(\sigma_{Y|X})$ and $v(y) = \log (f_Y(y)) - \log(\sigma_{X|Y})$. We can rearrange the terms as 
     \begin{equation}
    \begin{aligned}
        &\sigma_{X|Y}^2[2\sigma_{Y|X}^2 u(x) - \mu_{Y|X}^2] 
        + \sigma_{Y|X}^2[\mu_{X|Y}^2 - 2\sigma_{X|Y}^2 v(y)] \\
        &- y^2 \mu_{X|Y}^2 + x^2 \sigma_{Y|X}^2 
        + 2(\sigma_{X|Y}^2 \mu_{Y|X} y - \sigma_{Y|X}^2 \mu_{X|Y} x) = 0
    \end{aligned}.
    \end{equation}

    Here we separated each term to a set of terms that only depend on $x$ and $y$ i.e to pairs of $f_k(x)g_k(x)$. Then we will use Theorem \ref{theorem:aczel} and define our functions as $\{\varphi_i(x)\} = \{\ \sigma_{y|x}^2, x\sigma_{y|x}^2, x^2\sigma_{y|x}^2 \}$ and $\{\Psi_i(y)\} = \{\ \sigma_{x|y}^2, y\sigma_{x|y}^2, y^2\sigma_{x|y}^2 \}$. By plugging in these values to equations \eqref{eq:aczel_1},\eqref{eq:aczel_2}, and \eqref{eq:aczel_3}, we obtain, 

    \begin{equation}
    \label{eq:equ_fx}
    \begin{pmatrix}
        2\sigma_{Y|X}^2 u(x) - \mu_{Y|X}^2 \\
        \sigma_{Y|X}^2 \\
        1 \\
        x^2\sigma_{Y|X}^2 \\
        \mu_{y|x} \\
        x\sigma_{Y|X}^2
        \end{pmatrix}
        =\underset{=\bm{B}}{
        \begin{pmatrix}
        A & B & C  \\
        1 & 0 & 0 \\
        D & E & F  \\
        0 & 0 & 1 \\
        G & H & J \\
        0 & 1 & 0 
        \end{pmatrix}}
        \begin{pmatrix}
         \sigma_{y|x}^2 \\
         x\sigma_{y|x}^2 \\
        x^2\sigma_{y|x}^2
        \end{pmatrix},
    \end{equation}
    \begin{equation}
    \label{eq:equ_gx}
   \begin{pmatrix}
       \sigma_{X|Y}^2 \\
        -2\sigma_{X|Y}^2 v(x) + \mu_{X|Y}^2 \\
        -y^2\sigma_{X|Y}^2 \\
        1 \\
        2y\sigma_{X|Y}^2 \\
        -2\mu_{X|Y}
        \end{pmatrix}
        =\underset{=\bm{C}}{
        \begin{pmatrix}
        1 & 0 & 0  \\
        K & L & M \\
        0 & 0 & -1  \\
        N & P & Q \\
        0 & 2 & 0 \\
        R & S & T 
        \end{pmatrix}}
        \begin{pmatrix}
         \sigma_{x|y}^2 \\
         y\sigma_{x|y}^2 \\
        y^2\sigma_{x|y}^2
        \end{pmatrix},
        \end{equation}where $\bm{B}\bm{C}'=\bm{0}$ and terms $A,B,\dots ,S,T$ are constants. These constants satisfy the following equations:
        \begin{equation}
        \begin{aligned}
            A &= -K, \quad L &= -2G, \quad M &= D, \\
            B &= -R, \quad S &= -2H, \quad E &= T, \\
            C &= -N, \quad P &= -2J, \quad Q &= F.
        \end{aligned}
        \end{equation}
        
        This is a direct consequence of $\bm{B}\bm{C}' = \bm{0}$ and we will them to obtain the following identities. 
        \begin{align}
            \label{eq:iden_fx_1}
            \sigma_{Y|X}^2 &= \frac{1}{D+Ex+Fx^2} \\
            \notag
            &\quad  \text{, obtained from the 3rd row of \eqref{eq:equ_fx}} \\
            \label{eq:iden_fy_1}
            \sigma_{X|Y}^2 &= \frac{1}{-C-2Jy+Fy^2}\\
            \notag
            &\quad   \\
            \notag
            &\quad  \text{, obtained from the 4th row of \eqref{eq:equ_gx}} \\
            \label{eq:iden_fx_2}
            \mu_{Y|X} &= \frac{G+Hx+Jx^2}{D+Ex+Fx^2}  \\
            \notag
            &\quad   \text{, obtained from the 5th row of \eqref{eq:equ_fx} and}  \eqref{eq:iden_fx_1} \\
            \label{eq:iden_fx_3}
            u(x) &= \frac{1}{2} \left[ A + Bx + Cx^2 + \frac{\mu_{Y|X}^2}{\sigma_{Y|X}^2} \right]   \\
            \notag
            &\quad  \text{, by direct substitution of \eqref{eq:iden_fx_1} to \eqref{eq:iden_fx_2}}
        \end{align}

        From \eqref{eq:u_x_v_y} we can write $f_{X,Y}(x,y)$ as, 
        \begin{align}
            \notag
            f_{X,Y}(x,y) &= \frac{1}{\sqrt{2\pi}} \exp\left\{u(x)-\frac{(y-\mu_{Y|X})^2}{2\sigma^2_{Y|X}}\right\} \\
            \notag
            &= \frac{1}{\sqrt{2\pi}} \exp\left\{\frac{1}{2} \left[ A + Bx + Cx^2 + \frac{\mu_{Y|X}^2}{\sigma_{Y|X}^2} \right] 
            \right. \\
            \notag
               &\qquad \left. - \frac{(y-\mu_{Y|X})^2}{2\sigma^2_{Y|X}}\right\}\\
            \notag
            &= \frac{1}{\sqrt{2\pi}} \exp\left\{\frac{1}{2} \Big[  A + Bx + Cx^2   \right. \\
            \notag
               &\qquad \left. - y^2(D+Ex+Fx^2)+2y(G+Hx+Jx^2)\Big]\right\}\\
            \label{eq:f_joing_step1}
            &= \frac{1}{\sqrt{2\pi}} \exp\left\{ 
                \begin{bmatrix}
                1 \\ x \\ x^2
                \end{bmatrix}^T
                \begin{bmatrix}
                A/2 & G & -D/2 \\
                B/2 & H & -E/2 \\
                C/2 & J & -F/2
                \end{bmatrix}
                \begin{bmatrix}
                1 \\
                y \\
                y^2
                \end{bmatrix}
                \right\},
        \end{align}
        where we used identities from \eqref{eq:iden_fx_1}, \eqref{eq:iden_fx_2}, and \eqref{eq:iden_fx_3}. With out loss of generality we can assume $\sigma_{Y|X}^2$ is a constant and does not depend on $x$. Then we can say that $E=F=0$ by looking at \eqref{eq:iden_fx_1}. We also know $\sigma_{X|Y}^2 > 0$ for any $y$, which implies $C + 2Jy < 0, \forall y$ according to \eqref{eq:iden_fy_1}. Thus $J$ must be zero and $C<0$. Similarly, $\sigma_{Y|X}^2 > 0 \implies D > 0$. Replacing these to \eqref{eq:f_joing_step1} we get,
        \begin{align}
            \label{eq:f_joing_step2}
            f_{X,Y}(x,y) &= \frac{1}{\sqrt{2\pi}} \exp\left\{ 
                \begin{bmatrix}
                1 \\ x \\ x^2
                \end{bmatrix}^T
                \begin{bmatrix}
                A/2 & G & -D/2 \\
                B/2 & H & 0 \\
                C/2 & 0 & 0
                \end{bmatrix}
                \begin{bmatrix}
                1 \\
                y \\
                y^2
                \end{bmatrix}
                \right\}.
        \end{align}

        This is equivalent to our formulation in Lemma \ref{lemma:bivariate_g}. Hence, we can say that $f_{X,Y}(x,y)$ is a bivariate density with normal marginals. 
        
\end{proof}

\subsection{Derivation for conditional score}
\label{sec:cond_score_est}

In section \ref{subsec:prior_approx}, we have seen that many of the assumptions in the existing conditional approximation methods have an implicit assumption that $p(\bm{x}_0)$ is Gaussian. Without loss of generality, we assume $p(\bm{x}_0) \sim \mathcal{N}(0, \sigma_{\bm{x}(0)}^2 \bm{I})$. We treat the prior variance $ \sigma_{\bm{x}(0)}^2 $ as a hyper-parameter to be tuned. This assumption is applied only to find a tractable and simple model for $p(\bm{y}|\bm{x}_t)$; the complex data prior is already learned by our pre-trained diffusion model $\bm{s}_{\theta}$. Using this assumption, we move to approximating the likelihood $ p(\bm{y}|\bm{x}_t)$. First, we will introduce an intermediate variable $\bm{z}$ defined as $\bm{z} = \mathcal{A}(\bm{x}_0)$. It is easy to show the following identities hold:
\begin{equation*}
 p(\bm{y}|\bm{z}) = \mathcal{N}(\bm{z}, \sigma_n^2 \bm{I})  \text{ and } p(\bm{z}|\bm{x}_t) = \mathcal{N}(\bm{A}\hat{\bm{x}}_0, \bm{A}\bm{\Sigma}_{\bm{x}(0|t)}\bm{A}^T).
 \end{equation*}

Using these identities, we can write the likelihood function as,

\begin{align}
    \notag
    p(\bm{y}|\bm{x}_t) &= \int  p(\bm{y}|\bm{z}, \bm{x}_t)p(\bm{z}|\bm{x}_t) d\bm{z} = \int p(\bm{y}|\bm{z})p(\bm{z}|\bm{x}_t) d\bm{z}  \\
    \notag 
    &= \int d\bm{z}  \underbrace{\frac{1}{C_1} \exp\left(-\frac{ \|\bm{y} - \bm{z}\|^2_2)}{2\sigma_n^2}\right)}_{\mathcal{F}(\bm{x})} \\&\quad \quad
    \underbrace{\frac{1}{C_2} \exp\left(-(\bm{z} - \mu_{\bm{z}|\bm{x}(t)})^T\bm{\Sigma}_{\bm{z}|\bm{x}(t)}^{-1} (\bm{z} - \mu_{\bm{z}|\bm{x}(t)})\right)}_{\mathcal{G}(\bm{x})}   \\
    &= \int \mathcal{F}(\bm{y}-\bm{z})\mathcal{G}(\bm{z}) d\bm{z} = \mathcal{F} \circledast \mathcal{G}
\label{eq:factorize_yxt}
\end{align}
where $C_1=\sqrt{(2\pi)^d \sigma_n^2}$ and $C_2 =\sqrt{(2\pi)^d |\bm{\Sigma}_{\bm{z}|\bm{x}(t)}|}$ are the scalar normalization constants for the Gaussian PDFs. Using the fact that the convolution of two Gaussian PDFs results in a Gaussian PDF (although not normalized) whose mean is the sum of the individual means and whose variance is the sum of the individual variances, we get
\begin{align}
\label{eq:adx_our_pyx_est}
p(\bm{y}|\bm{x}_t) &\approx \mathcal{N}(\bm{A}\hat{\bm{x}}_{0}, \sigma_n^2 \bm{I}  + \bm{A}\bm{\Sigma}_{\bm{x}(0|t)}\bm{A}^T).
\end{align}

\subsection{Efficient covariance inversion}
\label{sec:adx_cov_inv}
\textbf{Inpainting.}
For in painting, we can write the forward model as 

\[
\bm{y} = \mathcal{A}(x) + \eta = \bm{M} \odot \bm{X} + \eta,
\]

where $\bm{M}$ is the 2d-inpainting mask applied across all channels and $A = \text{diag}(\bm{M})$. Since $\mathcal{A}$ is a diagonal matrix with only ones and zeros, $\mathcal{A}\mathcal{A}^T = \mathcal{A} = \text{diag}(\bm{M})$. The inverse covariance matrix can then be written as

\[
\bm{\Sigma}_{t} = \frac{1}{\sigma_n^2} \bm{I} + \frac{1}{\sigma^2_{0|t}}\text{diag}(\bm{M}).
\]

The log likelihood (the terms that depend on $\bm{x}_t$) can then be computed as
\[
(\Delta \bm{y})^T 
 \bm{\Sigma}_{t}^{-1}  (\Delta \bm{y}) = \left\| \frac{ \bm{Y} - \bm{M}\odot \hat{\bm{x}}_0}{\sigma_n^2 \mathbf{1} + \sigma^2_{0|t}\mathbf{M} } \right\|^2_2.
\]
Note the division is element-wise and $\mathbf{1}$ is a matrix of all ones. In cases where we are unable to computed $\mathcal{A}\mathcal{A}^T$ effectively we will approximate it as the identity matrix. The conditional score can then be computed as
\begin{equation*}
 \bm{\kappa}_t = \gamma \bm{M} \odot \left(\frac{ \bm{y} - \bm{M}\odot \hat{\bm{x}}_0}{\sigma_n^2 \bm{1} + \sigma^2_{0|t}\bm{M} }\right).
\end{equation*}

\noindent\textbf{Deblurring.} For deblurring we can write the forward model as 

\[
 \bm{y} = \bm{h} \circledast \bm{x} + \bm{\eta} = \bm{H}\bm{x} + \bm{\eta}.
\]
We assume the blurring matrix $\bm{H}$ is the matrix representation of the cyclic convolution $ \bm{h} \circledast \bm{x}$. Under this assumption, we can represent $H$ as a doubly blocked block circulant matrix (BCCB). 

\begin{lemma}[\cite{olson2014circulant}]
    \label{lem:circ_diag}
    Let $\bm{M} \in \mathbb{R}^{n \times n}$ be a doubly blocked block circulant matrix (BCCB). Then $\bm{M}$ is diagonalizable, with its eigenvectors being the columns of the two-dimensional Discrete Fourier Transform matrix $\bm{F}$. Thus, we can factorize it as
    \[
    \bm{M} = \bm{F}^{H} (\Lambda) \bm{F},
    \]
    where $\Lambda = \bm{F}h$.

            
\end{lemma}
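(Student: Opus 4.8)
The plan is to establish the factorization in two stages, first diagonalizing a single circulant block and then lifting to the block-circulant structure via Kronecker products; conceptually the entire statement is the operator form of the two-dimensional circular convolution theorem. Recall that, with $N^2 = n$, a BCCB matrix $\bm{M}$ is the matrix of the 2D cyclic convolution $\bm{x} \mapsto \bm{h} \circledast \bm{x}$, so the eventual eigenvalues should be exactly the 2D DFT coefficients of the generating kernel, i.e. $\Lambda = \bm{F}\bm{h}$.

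First I would prove the one-dimensional base case: any $N \times N$ circulant matrix $\bm{C}$ is a polynomial in the cyclic shift matrix $\bm{P}$, namely $\bm{C} = \sum_{k=0}^{N-1} c_k \bm{P}^k$, where $(c_0,\dots,c_{N-1})$ is its first column. Since $\bm{P}$ is a permutation whose eigenvectors are the columns of the unitary 1D DFT matrix $\bm{W}$, we have $\bm{P} = \bm{W}^\mathsf{H}\bm{D}\bm{W}$ with $\bm{D}$ diagonal holding the $N$th roots of unity, and therefore every power $\bm{P}^k = \bm{W}^\mathsf{H}\bm{D}^k\bm{W}$ is diagonalized by the same $\bm{W}$. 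This yields $\bm{C} = \bm{W}^\mathsf{H}\,\mathrm{diag}(\bm{W}\bm{c})\,\bm{W}$, the single-index convolution theorem.

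Next I would lift to the BCCB case. A block-circulant matrix with circulant blocks $\bm{B}_0,\dots,\bm{B}_{N-1}$ can be written compactly as $\bm{M} = \sum_{j=0}^{N-1} \bm{P}^j \otimes \bm{B}_j$. Substituting $\bm{P}^j = \bm{W}^\mathsf{H}\bm{D}^j\bm{W}$ and, from the base case, $\bm{B}_j = \bm{W}^\mathsf{H}\bm{\Lambda}_j\bm{W}$ with each $\bm{\Lambda}_j$ diagonal, and then applying the mixed-product property $(\bm{A}\otimes\bm{B})(\bm{C}\otimes\bm{E}) = (\bm{A}\bm{C})\otimes(\bm{B}\bm{E})$, I obtain
\[
\bm{M} = (\bm{W}^\mathsf{H}\otimes\bm{W}^\mathsf{H})\Big(\sum_{j=0}^{N-1} \bm{D}^j\otimes\bm{\Lambda}_j\Big)(\bm{W}\otimes\bm{W}).
\]
Because a Kronecker product of two diagonal matrices is diagonal, the bracketed sum is diagonal; identifying $\bm{F} = \bm{W}\otimes\bm{W}$ as the (unitary) 2D DFT matrix, so that $\bm{F}^\mathsf{H} = \bm{W}^\mathsf{H}\otimes\bm{W}^\mathsf{H}$ and $\bm{F}^\mathsf{H}\bm{F}=\bm{I}$, gives $\bm{M} = \bm{F}^\mathsf{H}\bm{\Lambda}\bm{F}$ with $\bm{\Lambda}$ diagonal (the $\Lambda$ of the statement).

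The main obstacle I anticipate is bookkeeping rather than conceptual: fixing the index/ordering conventions so that $\bm{F} = \bm{W}\otimes\bm{W}$ really acts as the 2D DFT on the vectorized image (row- versus column-major stacking), and keeping the normalization consistent so that $\bm{F}^\mathsf{H}\bm{F}=\bm{I}$ rather than $n\bm{I}$. The final identification $\Lambda = \bm{F}\bm{h}$ is then clean: writing $\bm{\lambda}$ for the vector of diagonal entries of $\bm{\Lambda}$ and applying the factorization to the first standard basis vector $\bm{e}_0$, for which $\bm{F}\bm{e}_0 \propto \bm{1}$ (the all-ones vector), gives $\bm{M}\bm{e}_0 \propto \bm{F}^\mathsf{H}\bm{\Lambda}\bm{1} = \bm{F}^\mathsf{H}\bm{\lambda}$; since $\bm{M}\bm{e}_0$ is precisely the generating kernel $\bm{h}$ (the first column of $\bm{M}$), unitarity yields $\bm{\lambda} = \bm{F}\bm{h}$ up to the fixed normalization constant, as claimed.
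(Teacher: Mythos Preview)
The paper does not actually prove this lemma; it is quoted from \cite{olson2014circulant} and used as a black box in the deblurring covariance computation. So there is no ``paper's own proof'' to compare against.

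That said, your argument is correct and is the standard route: reduce to the one-dimensional circulant case (polynomials in the cyclic shift, diagonalized by the 1D DFT), then lift to BCCB via the Kronecker decomposition $\bm{M}=\sum_j \bm{P}^j\otimes \bm{B}_j$ and the mixed-product identity, identifying $\bm{F}=\bm{W}\otimes\bm{W}$. Your closing step recovering $\Lambda=\bm{F}\bm{h}$ from the first column is also fine. The only caveats are exactly the ones you flag: the row/column-major convention must match the paper's use of $\bm{F}$ as the 2D DFT on vectorized images, and the normalization must be fixed so that $\bm{F}$ is unitary (otherwise stray factors of $n$ appear in $\Lambda$). Neither is a gap in the mathematics.
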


Using Lemma \ref{lem:circ_diag}, the covariance matrix for deblurring is then written as 
\begin{align*}
    \bm{\Sigma}_{t} &= \sigma_n^2 \bm{I} + \sigma^2_{0|t} (\bm{F}^{H} \Lambda \bm{F})(\bm{F}^{H} \Lambda \bm{F})^{H} \\
    &= \sigma_n^2 \bm{I} + \sigma^2_{0|t} \bm{F}^{H} \Lambda \Lambda^{H} \bm{F}.
\end{align*}

Subsequently, the inverse covariance can be computed as


\begin{align}
    \bm{\Sigma}_{t} ^{-1}
    &= (\sigma_n^2 \bm{F}^{H}\bm{F} + \sigma^2_{0|t} \bm{F}^{H} \Lambda \Lambda^{H} \bm{F})^{-1} \notag\\
    &= \left[\bm{F}^{H} \left(\sigma_n^2\bm{I} + \sigma^2_{0|t} \bm{\Lambda} \bm{\Lambda}^{H} \right) \bm{F} \right]^{-1} \notag\\
    &= \bm{F}^{H} \left(\sigma_n^2\bm{I} + \sigma^2_{0|t} \bm{\Lambda} \bm{\Lambda}^{H} \right)^{-1} \bm{F} \label{eq:apdx_inv_cov_deblur}
\end{align}

This inversion only requires inverting a diagonal matrix. The log likelihood (the terms that depend on $\bm{x}_t$) can then be computed as

\begin{align}
(\Delta \bm{y})^T 
 \bm{\Sigma}_{t}^{-1}  (\Delta \bm{y}) &= (\bm{F}\Delta \bm{y})^{H} \left(\sigma_n^2\bm{I} + \sigma^2_{0|t} \bm{\Lambda} \bm{\Lambda}^{H} \right)^{-1} \bm{F}\Delta \bm{y}
\end{align}

The conditional score can then be computed as (after substitution of \eqref{eq:apdx_inv_cov_deblur} to \eqref{eq:our_approx} )
\begin{equation*}
 \bm{\kappa}_t = \gamma \bm{F}^\mathsf{H}\bm{\Lambda}^\mathsf{H}\left(\frac{1}{\sigma_n^2}\bm{I} + \frac{1}{\sigma^2_{0|t}|\bm{\Lambda}|^2} \bm{I}\right)(\bm{F}\bm{y} - \bm{\Lambda} \bm{F}\bm{\hat{x}_0}).
\end{equation*}

\noindent\textbf{Super-resolution.} For super-resolution, we can write the forward model as 
\[
 \bm{y} = (\bm{h} \circledast \bm{x})_{\downarrow_d} + \bm{\eta} = \bm{S}\bm{H}\bm{x} + \bm{n},
\]
where $\bm{S}$ is an $m^2 \times n^2$ the down-sampling matrix with a factor $d$, $\bm{H}$ is the cyclic-convolution matrix. Using these identities, we can compute the inverse covariance matrix. 
\begin{align*}
    \bm{\Sigma_}{t} &= \sigma_n^2 \bm{I} + \sigma^2_{0|t} (\bm{S}\bm{F}^{H} \Lambda \bm{F})(\bm{S}\bm{F}^{H} \Lambda \bm{F})^{H} \\
    &= \sigma_n^2 \bm{I} + \sigma^2_{0|t} \bm{S}\bm{F}^{H} \Lambda \bm{F} \bm{F}^{H} \Lambda^{H}\bm{F}\bm{S}^{H}  \\    
    &= \sigma_n^2 \bm{I} + \sigma^2_{0|t} \bm{S}\bm{F}^{H} \Lambda\Lambda^{H}\bm{F}\bm{S}^{H}.  \\
    &= \sigma_n^2 \bm{I} + \sigma^2_{0|t} \bm{S}\bm{F}^{H} \Lambda\Lambda^{H}\bm{F}\bm{S}^{H}.  \\
\end{align*}
The inverse covariance can be computed as


\begin{align}
    \bm{\Sigma}_{t} ^{-1} &= \bm{F}^{H} \left(\sigma_n^2\bm{I} + \sigma^2_{0|t}\bm{F}\bm{S}\bm{F}^{H} |\bm{\Lambda}|^2\bm{F}\bm{S}^{H}\bm{F}^{H}\right)^{-1} \bm{F} 
    \label{eq:apdx_sr_cov_inv}
\end{align}
Now we will present how we can efficiently invert the covariance matrix. First we define our down sampling matrix $\bm{S}$.

\begin{definition}
The down-sampling operator $\bm{S} \in \mathbb{R}^{m^2 \times n^2}$ is a decimation matrix that samples the first pixel in every non-overlapping $d \times d$ block of our image. Mathematically, we write $\bm{S}$  as a Kronecker product of two sparse matrices $\underline{\bm{S}} \in \mathbb{R}^{m \times n}$ that sample every $d$-th column. The relation is give by $\bm{S} = \underline{\bm{S}} \otimes \underline{\bm{S}}$.
\end{definition}

\begin{lemma}
    \label{lem:sub_circ}
    Let $\bm{M} \in \mathbb{R}^{n \times n}$ be a circulant matrix, and let $\underline{\bm{S}} \in \mathbb{R}^{m \times n}$ be a down-sampling matrix with a factor $d$, where $n$ is divisible by $d$ and let $m = n/d$. Then the matrix $\underline{\bm{S}}\bm{M}\underline{\bm{S}}^H$ is also a circulant matrix.
    \begin{proof} Since $\bm{M}$ is a circulant matrix, for any $i, j \in \{0,\dots,n-1\}$, we have $\bm{M}_{i,j} = \bm{M}_{(i-j)\%n}$. Let $\bm{A} = \underline{\bm{S}}\bm{M}\underline{\bm{S}}^{H}$ and $\bm{A}_{i,j} = \bm{M}_{i*d,j*d} $, assuming the matrix $\bm{S}$ and it's adjoint $\bm{S}^H$ sample every $d$-th rows and columns of $\underline{\bm{S}}\bm{M}\underline{\bm{S}}^H$ respectively . For any $i,j \in \{0,\dots,m-1\}$, we have 
    \[
     \bm{A}_{i,j}= \bm{M}_{i * d, j * d} = \bm{M}_{d(i-j)\%n} = \bm{M}_{(i-j)\%m}.
    \]
    Thus $\underline{\bm{S}}\bm{M}\underline{\bm{S}}^H \in \mathbb{R}^{m \times m}$ is a circulant matrix.
    
    \end{proof}
\end{lemma}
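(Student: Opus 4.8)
The plan is to prove the claim by a direct entrywise computation, using the characterization that a square matrix is circulant exactly when its $(i,j)$ entry depends only on the index difference modulo the matrix size. Concretely, writing $c_0,\dots,c_{n-1}$ for the generating sequence of $\bm{M}$, so that $\bm{M}_{i,j} = c_{(i-j)\%n}$ (with $a\%n$ the nonnegative residue of $a$ modulo $n$), the entire argument reduces to showing that $(\underline{\bm{S}}\bm{M}\underline{\bm{S}}^{\mathsf{H}})_{k,l}$ is a function of $(k-l)\%m$ alone.

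First I would make the action of the down-sampling matrix explicit: $\underline{\bm{S}} \in \mathbb{R}^{m\times n}$ retains the rows indexed by $0,d,2d,\dots,(m-1)d$, and $\underline{\bm{S}}^{\mathsf{H}}$ retains the columns at the same positions (the Hermitian transpose equals the ordinary transpose since $\bm{M}$ is real). Hence for $k,l \in \{0,\dots,m-1\}$ the product picks out a single entry of $\bm{M}$, namely $(\underline{\bm{S}}\bm{M}\underline{\bm{S}}^{\mathsf{H}})_{k,l} = \bm{M}_{kd,\,ld} = c_{(kd-ld)\%n} = c_{\,d(k-l)\%n}$.

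The crux — and the one step I expect to require genuine care — is the modular reduction that converts this back into a circulant structure on the coarser index set. Using the hypothesis $n = md$, I would verify the identity $d\,a\,\%\,(md) = d\,(a\,\%\,m)$ for every integer $a$: writing $a = qm + r$ with $0 \le r < m$ gives $da = q(md) + dr$ with $0 \le dr < md$, so the residue is precisely $dr$. Applying this with $a = k-l$ yields $c_{\,d(k-l)\%n} = c_{\,d((k-l)\%m)}$, which depends on $k$ and $l$ only through $(k-l)\%m$. This establishes that $\underline{\bm{S}}\bm{M}\underline{\bm{S}}^{\mathsf{H}} \in \mathbb{R}^{m\times m}$ is circulant.

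The only subtlety beyond this is bookkeeping with the index conventions — zero-based indices, and residues taken in $\{0,\dots,m-1\}$ so that $dr$ lands in the valid range $\{0,\dots,n-1\}$. I would close by noting that this one-dimensional fact is exactly what upgrades Lemma~\ref{lem:circ_diag} to the down-sampled operator appearing in the super-resolution derivation, since the full down-sampler factors as $\bm{S} = \underline{\bm{S}} \otimes \underline{\bm{S}}$ and circulant preservation passes through the Kronecker product.
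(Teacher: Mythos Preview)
Your proposal is correct and follows essentially the same entrywise argument as the paper: identify $(\underline{\bm{S}}\bm{M}\underline{\bm{S}}^{\mathsf{H}})_{k,l}$ with $\bm{M}_{kd,\,ld}$, then use the circulant structure and the divisibility $n=md$ to conclude the entry depends only on $(k-l)\%m$. Your version is in fact slightly more careful than the paper's, since you explicitly verify the modular identity $d\,a\,\%\,(md) = d\,(a\,\%\,m)$ that the paper's chain $\bm{M}_{d(i-j)\%n} = \bm{M}_{(i-j)\%m}$ leaves implicit.
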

\begin{lemma} 
    The matrix $\underline{\bm{S}}\bm{M}\underline{\bm{S}}^H$
        is diagonalizable with diagonal entries given by 
        \begin{equation*}
          \bm{\lambda}(i) = \frac{1}{d} \sum_{j=0}^{d-1} \bm{\Lambda}(i+j \cdot m),
        \end{equation*}
        where $\bm{\Lambda}$ are the eigenvalues of the matrix $M$.
    \begin{proof}
        From Lemma \ref{lem:sub_circ}, we know $\underline{\bm{S}}\bm{M}\underline{\bm{S}}^{H}$ is diagonlizable. Now we will show how to obtain the diagonal elements. Using the diagonalizablity propriety, we can write the matrix as,
        \begin{align}
            \underline{\bm{S}}\bm{M}\underline{\bm{S}}^{H} &= \underline{\bm{S}}\bm{F_n}^H\bm{\Lambda}\bm{F_n}\underline{\bm{S}}^{H} \\
            \bm{F}_m^H(\bm{\lambda}) \bm{F}_m &= \underline{\bm{S}}\bm{F_n}^H(\bm{\Lambda})\bm{F_n}\underline{\bm{S}}^{H} \\
           \label{eq:diag_lemma_last}
           \bm{\lambda} &= \bm{F}_m \underline{\bm{S}}\bm{F_n}^H\bm{(\Lambda)}\bm{F_n}\underline{\bm{S}}^{H}  \bm{F}_m^H.
        \end{align}
        Here $\bm{F}_n$ represents is an $n\times n$ DFT matrix and $\bm{F}_m$ is an $m\times m$ DFT matrix. Now, let's take a closer look at the matrix $\bm{F}_m \underline{\bm{S}}\bm{F_n}^H$. 
    \begin{align}
        \bm{F}_m \underline{\bm{S}}\bm{F_n}^H&= \frac{1}{\sqrt{mn}} \begin{bmatrix}
        1  & \cdots & 1 \\
        \vdots & \vdots & \vdots \\
        1 & \cdots & \bar{\omega}^{(m-1)^2}
        \end{bmatrix} 
        \begin{bmatrix} 
            1 & 0 & \dots & 0 \\
            0 & \dots &  & 0 \\
            0 & 0 &1 & 0 \\
            0 & 0 & 0 & 0 \\
            \end{bmatrix}
            \notag \\& \quad \cdot
            \begin{bmatrix}
            1 & 1 & 1 & \cdots & 1 \\
            1 & \omega^{-1} & \omega^{-2} & \cdots & \omega^{-(n-1)} \\
            1 & \omega^{-2} & \omega^{-4} & \cdots & \omega^{-2(n-1)} \\
            \vdots & \vdots & \vdots & \ddots & \vdots \\
            1 & \omega^{-(n-1)} & \omega^{-2(n-1)} & \cdots & \omega^{-(n-1)^2}
            \end{bmatrix}   \\
            &= \frac{1}{\sqrt{mn}} 
            \begin{bmatrix}
            1 & 1 & \cdots & 1 \\
            1 & \bar{\omega} & \cdots & \bar{\omega}^{m-1} \\
            1 & \bar{\omega}^2 & \cdots & \bar{\omega}^{2(m-1)} \\
            \vdots & \vdots &\ddots & \vdots \\
            1 & \bar{\omega}^{m-1} &  \cdots & \bar{\omega}^{(m-1)^2}
            \end{bmatrix} \notag \\& \quad \cdot  
        \begin{bmatrix}
            1 & 1 & 1 & \cdots & 1 \\
            1 & \omega^{-d} & \omega^{-2d} & \cdots & \omega^{-d(n-1)} \\
            1 & \omega^{-2d} & \omega^{-4d} & \cdots & \omega^{-2d(n-1)} \\
            \vdots & \vdots & \vdots & \ddots & \vdots \\
            1 & \omega^{-d(m-1)} & \omega^{-2d(m-1)} & \cdots & \omega^{-d(m-1)(n-1)}
        \end{bmatrix}  \\
            \label{eq:dft_periodic}
            &= \frac{1}{\sqrt{mn}} 
           \begin{bmatrix}
            m \bm{I}_{m\times m} & m \bm{I}_{m\times m} & \cdots & m \bm{I}_{m\times m} 
        \end{bmatrix}_{m \times n},
    \end{align}
    where $\bar{\omega} = e^{\frac{-2\pi i}{m}}$ and $\omega = e^{\frac{-2\pi i}{n}}$. Equation \eqref{eq:dft_periodic} comes from the following identity. We can compute the entry at row $k$ and column $l$ of the matrix as,
    \begin{align*}
     (\bm{F}_m \underline{\bm{S}}\bm{F_n}^H)_{k,l} &= \sum_{j=0}^{m-1} \bar\omega^{kj}{\omega}^{-jdl} = \sum_{j=0}^{m-1} e^{\frac{-2\pi i kj}{m}}  e^{\frac{2\pi i jdl}{n}} \\ 
     &= \sum_{j=0}^{m-1} e^{2\pi ij (\frac{l-k}{m})}
    \end{align*}
    When $k = l$, the exponent becomes zero, resulting in the terms summing up to $m$. If $l - k < m$ and $l \neq m$, we can apply the geometric sum formula as
    \begin{equation*}
        \sum_{j=0}^{m-1} e^{2\pi ij (\frac{l-k}{m})} = \dfrac{1 - e^{2\pi i(l-k})}{1 - e^{2\pi i(\frac{l-k}{m})}} = 0 .
    \end{equation*}
    
    If $l-k > m$, due to the periodicity of the $N$th roots of unity, the values will be repeating. Thus we obtain $d$ blocks $m \times m$ identity matrices stacked horizontally.
    
    We can use these results to compute the entries of the diagonal matrix in \eqref{eq:diag_lemma_last},
       \begin{align} 
            \notag
            \bm{\lambda}
            &= \frac{1}{mn}  \begin{bmatrix}
                m \bm{I}_{m\times m} & \cdots & m \bm{I}_{m\times m} 
                \end{bmatrix}_{m \times n}
                \bm{\Lambda}_{n \times n}
                 \notag \\& \quad \cdot
                \begin{bmatrix}
                m \bm{I}_{m\times m} \\ m \bm{I}_{m\times m} \\ \vdots \\ m \bm{I}_{m\times m} 
                \end{bmatrix}_{n \times m} \\
            \notag
            &= \frac{1}{mn}  \begin{bmatrix}
                m \bm{I}_{m\times m} & m \bm{I}_{m\times m} & \cdots & m \bm{I}_{m\times m} 
            \end{bmatrix}_{m \times n}
             \notag \\& \quad \cdot
            \begin{bmatrix}
                m \text{diag}(\Lambda_0, \dots \Lambda_{m-1}) \\ 
                m \text{diag}(\Lambda_m, \dots \Lambda_{2m-1}) \\ \vdots \\ 
                m \text{diag}(\Lambda_{n-m-1}, \dots \Lambda_{n-1})
            \end{bmatrix}_{n \times m} \\
        \notag
         &= \frac{m}{n}  \sum_{j=0}^{d-1} \text{diag}(\Lambda_{jm}, \dots \Lambda_{jm-1}) \\
         &= \frac{1}{d} \begin{bmatrix}
             \Lambda_0 + \Lambda_m + \dots \Lambda_{m(d-1)} \\
             \Lambda_1 + \Lambda_{m+1} + \dots \Lambda_{2m-1)} \\
             \vdots \\
             \Lambda_{m-1} + \Lambda_{2m-1} + \dots \Lambda_{md-1}
         \end{bmatrix}
        \end{align}
    \end{proof}
    \end{lemma}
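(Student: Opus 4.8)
The plan is to build directly on the two preceding lemmas. Lemma~\ref{lem:sub_circ} guarantees that $\underline{\bm{S}}\bm{M}\underline{\bm{S}}^H$ is circulant, and Lemma~\ref{lem:circ_diag} then guarantees that it is diagonalized by the $m\times m$ DFT matrix $\bm{F}_m$; so only the diagonal entries $\bm{\lambda}$ remain to be identified. Writing the diagonalization of the circulant $\bm{M}$ as $\bm{M} = \bm{F}_n^H \bm{\Lambda} \bm{F}_n$ and that of the reduced matrix as $\underline{\bm{S}}\bm{M}\underline{\bm{S}}^H = \bm{F}_m^H \, \text{diag}(\bm{\lambda}) \, \bm{F}_m$, I would equate the two and isolate the diagonal,
\[
\text{diag}(\bm{\lambda}) \;=\; \bm{F}_m \underline{\bm{S}} \bm{F}_n^H \, \bm{\Lambda} \, \bm{F}_n \underline{\bm{S}}^H \bm{F}_m^H \;=\; \bm{G}\,\bm{\Lambda}\,\bm{G}^H, \qquad \bm{G} := \bm{F}_m \underline{\bm{S}} \bm{F}_n^H \in \mathbb{C}^{m\times n},
\]
so the entire computation collapses onto understanding the single rectangular matrix $\bm{G}$.

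The decisive step is to evaluate $\bm{G}$ entrywise. Its $(k,l)$ entry is the geometric sum $\frac{1}{\sqrt{mn}}\sum_{j=0}^{m-1} e^{2\pi i j (l-k)/m}$, which by the roots-of-unity cancellation equals $m/\sqrt{mn}$ exactly when $l \equiv k \pmod m$ and vanishes otherwise. Hence $\bm{G}$ has the clean block form $\bm{G} = \tfrac{1}{\sqrt{mn}}\,[\,m\bm{I}_m \;\; m\bm{I}_m \;\; \cdots \;\; m\bm{I}_m\,]$, a horizontal stack of $d = n/m$ copies of $m\bm{I}_m$. I expect this cancellation argument to be the main obstacle, since the subsampling by $\underline{\bm{S}}$ aliases the $n$ frequencies of $\bm{F}_n$ onto the $m$ frequencies of $\bm{F}_m$ modulo $m$, and one must track both the case $l=k$ (where the sum is $m$) and the geometric-series vanishing for $l\neq k \pmod m$ carefully.

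Finally, I would substitute this block form into $\bm{G}\bm{\Lambda}\bm{G}^H$. Partitioning the diagonal $\bm{\Lambda}$ into the $d$ consecutive blocks $\text{diag}(\bm{\Lambda}_{jm},\dots,\bm{\Lambda}_{(j+1)m-1})$ for $j=0,\dots,d-1$, the product reduces to a sum of these $d$ diagonal blocks scaled by $\tfrac{m^2}{mn}=\tfrac{1}{d}$, giving $\bm{\lambda}(i) = \tfrac{1}{d}\sum_{j=0}^{d-1}\bm{\Lambda}(i+j m)$. Because the result is a sum of diagonal matrices it is itself diagonal, which confirms simultaneously that $\bm{F}_m$ diagonalizes $\underline{\bm{S}}\bm{M}\underline{\bm{S}}^H$ and that its eigenvalues are exactly the claimed periodic averages, completing the proof.
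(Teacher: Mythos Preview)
Your proposal is correct and follows essentially the same approach as the paper: both diagonalize via $\bm{F}_m$ and $\bm{F}_n$, reduce the problem to computing the rectangular matrix $\bm{G} = \bm{F}_m \underline{\bm{S}} \bm{F}_n^H$ entrywise via a roots-of-unity geometric sum, obtain the stacked block-identity form $\tfrac{1}{\sqrt{mn}}[\,m\bm{I}_m \cdots m\bm{I}_m\,]$, and then collapse $\bm{G}\bm{\Lambda}\bm{G}^H$ into the periodic average. The structure, key computation, and final substitution are all the same.
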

Now we will use these lemmas to compute efficient inversion of \eqref{eq:apdx_sr_cov_inv}.  
\begin{align*}
    \bm{F}\bm{S}\bm{F}^{H} &= (\bm{F}_m \otimes \bm{F}_m)(\underline{\bm{S}} \otimes \underline{\bm{S}}) (\bm{F}_n \otimes \bm{F_n})^H \\ &= \frac{1}{d} 
           \begin{bmatrix}
            \bm{I}_{m\times m} & \cdots & \bm{I}_{m\times m} 
        \end{bmatrix}_{m \times n} \otimes \\ & \quad\quad\quad \begin{bmatrix}
            \bm{I}_{m\times m} & \cdots & \bm{I}_{m\times m} 
        \end{bmatrix}_{m \times n}.
\end{align*}

We observe that the matrix $\bm{F} \bm{S} \bm{F}^{H}$ consists of blocks that are either $m \times m$ identity matrices (scaled by $1/d$) or $m \times m$ zero matrices. When this matrix multiplies $\bm{\Lambda}^2$ from both the left and the right, the result is another diagonal matrix $\bm{\Gamma}$, where the entries are the averages of $d^2$ elements in $\bm{\Lambda}^2$. In particular, the diagonal elements of $\bm{\Gamma}$ are given by
\begin{align} 
    \notag
    \bm{\Gamma} 
    &= \frac{1}{d^2} \begin{bmatrix}
        \Lambda_0 + \dots + \Lambda_{m(d-1)} 
        \\ + \Lambda_{m^2d} + \dots + \Lambda_{m^2d + m(d-1)}
        + \pmb{\dots} \\
        + \Lambda_{m^2d(d-1)} + \dots + \Lambda_{(m^2d + m)(d-1)}
        \\[1.5em]
        \Lambda_1 + \dots + \Lambda_{m(d-1)+1} 
        \\ + \Lambda_{m^2d + 1} + \dots + \Lambda_{m^2d + m(d-1) + 1} 
        + \pmb{\dots} \\
        + \Lambda_{m^2d(d-1) + 1} + \dots + \Lambda_{(m^2d + m)(d-1) + 1}
       \\[1.5em]
        \pmb{\vdots}
        \\[1.5em]
        \Lambda_{(md+1)(m-1)} + \dots + \Lambda_{(md+1)(m-1) + m(d-1)} \\
        + \Lambda_{(md+1)(m-1) + m^2d} + \pmb{\dots} \\
        + \Lambda_{md(md-1) + m -1} + \dots + \Lambda_{m^2d^2 - 1} 
    \end{bmatrix}
\end{align}
Equivalently, for $ i \in \{0,1,\dots m^2\}$, each element of $\Gamma$ can be written as,
\begin{equation*}
          \bm{\Gamma({i})} = \frac{1}{d^2} \sum_{j=0}^{d-1}\sum_{k=0}^{d-1} \bm{\Lambda}[m\left(jdm + k + d \lfloor i/m \rfloor \right) + \text{rem}(i,m)],
\end{equation*}
where $\text{rem}(i,m)$ is the remainder in the division $i/m$. Finally, we can write the conditional score as 
\begin{equation*}
     \bm{\kappa}_t = \gamma \bm{F}^\mathsf{H}\bm{\Lambda}^\mathsf{H}\bm{F}\bm{S}^\mathsf{H}\left({\frac{1}{\sigma_n^2} \bm{I} + \frac{1}{\sigma^2_{0|t}|\bm{\Gamma}|^2}\bm{I}}\right)(\bm{y} - \bm{S}\bm{H}\bm{\hat{x}_0}).
\end{equation*}

\noindent\textbf{Separable systems.} We can write the forward operator of a separable system as $\bm{A}\bm{x} = \bm{A}_l\bm{X}\bm{A}_r^T$, where $\bm{A}=\bm{A}_l \otimes \bm{A}_r$, $\bm{x} \in \mathbb{R}^{n^2,1}$, $\bm{X} \in \mathbb{R}^{n\times n}$, $\bm{A}_l, \bm{A}_r \in \mathbb{R}^{m \times n}$ and $\bm{A} \in \mathbb{A}^{m^2 \times n^2}$. Let us denote the SVD decomposition of these matrices as $\bm{A}=\bm{U}\bm{\Sigma}\bm{V}^T$,  $\bm{A}_l=\bm{U}_l\bm{\Sigma}_l\bm{V}_l^T$, and  $\bm{A}_r=\bm{U}_r\bm{\Sigma}_r\bm{V}^T_r$. We also have the following properties from Kronecker products: $\bm{A}\bm{x}=\bm{A}_l\bm{X}\bm{A}_R^T$, $\bm{U}=\bm{U}_l\otimes \bm{U}_r $, $\bm{\Sigma}=\bm{\Sigma}_l\otimes \bm{\Sigma}_r $ and , $\bm{V}=\bm{V}_l\otimes \bm{V}_r $. The conditional score is given by 
\begin{align*}
 \bm{\kappa}_t &\simeq  \bm{A}^T
    \left(\sigma_n^2 \bm{I}  + \bm{A}\bm{\Sigma}_{\bm{x}(0|t)}\bm{A}^T \right)^{-1} (\bm{y} - \bm{A}\hat{\bm{x}}_0) \\
    &= \bm{V}\bm{\Sigma}\bm{U}^T
    \left(\sigma_n^2 \bm{I}  + \bm{\sigma}_{\bm{x}(0|t)}\bm{U}\bm{\Sigma}^2\bm{U}^T \right)^{-1} (\Delta\bm{y})\\
    &= \bm{V}\bm{\Sigma}\bm{U}^T
    \left(\sigma_n^2 \bm{U}\bm{U}^T  + \bm{\sigma}_{\bm{x}(0|t)}\bm{U}\bm{\Sigma}^2\bm{U}^T \right)^{-1} (\Delta\bm{y})\\
    &= \bm{V}\bm{\Sigma}
    \left(\sigma_n^2 \bm{I} + \bm{\sigma}_{\bm{x}(0|t)}\bm{\Sigma}^2 \right)^{-1} \bm{U}^T(\Delta\bm{y})\\
    &= (\bm{V}_l\otimes \bm{V}_r)(\bm{\Sigma}_l\otimes \bm{\Sigma}_r)\\& \quad
    \left(\sigma_n^2 \bm{I} + \bm{\sigma}_{\bm{x}(0|t)}(\bm{\Sigma}_l\otimes \bm{\Sigma}_r)^2 \right)^{-1} (\bm{U}_l\otimes \bm{U}_r)^T(\Delta\bm{y})\\
    &= \bm{V}_l \bm{\Sigma}_l
    \left( \dfrac{\bm{U}_l^T\Delta\bm{Y}\bm{U}_r}{\sigma_n^2 \bm{1}\bm{1}^T + \bm{\sigma}_{\bm{x}(0|t)}(\bm{\sigma}_l \bm{\sigma}_r^T)^2} \right)\bm{\Sigma}_r \bm{U}_r^T,
\end{align*}
where the last step is obtained by repeatedly applying $(\bm{A}_l \otimes \bm{A}_r)\bm{x}=\bm{A}_L\bm{X}\bm{A}_R^T$.

\section{Experimental details}
\label{sec:imp_details}
In this section, we outline the implementation and hyper-parameter details of our proposed method as well as the competing methods.

\subsection{Experimental setup}
We utilized existing pre-trained diffusion models for both datasets, for FFHQ we obtained it from \cite{chung2023diffusion} and for ImageNet from \cite{dhariwal2021diffusion}. Note that both of these network are trained as generative denoisers for their respective datasets and are not fine-tuned for any image restoration task. We use these models for all methods in our comparison set that use diffusion models. Following \cite{chung2023diffusion}, we use the first 1K images of the FFHQ dataset as validation data. We resize the original $1024 \times 1024$ images in this dataset to $256 \times 256$. For ImageNet, we obtain the preprocessed $256 \times 256$ images from \cite{dhariwal2021diffusion} and use the first 1K images as our validation dataset. We use these validation datasets for all comparison experiments. For performance metrics, we report both standard reconstruction metrics, including peak signal-to-noise ratio (PSNR) and structural similarity index (SSIM) \cite{wang2004image}, as well as perceptual metrics, including Fréchet inception distance (FID) \cite{heusel2017gans} and Learned Perceptual Image Patch Similarity (LPIPS) \cite{zhang2018unreasonable}.

\subsection{Inverse problems. } 
\label{subsec:apdx_inv_prob_detais}
The first inverse problem we consider is image super resolution. The task here is to recover an image blurred using a $9 \times 9$ Gaussian blur kernel with standard deviation $3.0$ followed by a $\times 4$ down-sampling using a decimation matrix. 
The down-sampling operator is a decimation matrix that samples the first pixel in every non-overlapping $4 \times 4$ block of our image. 
For Gaussian deblurring, we use a $61 \times 61$ kernel with standard deviation $3.0$ and for motion delubrring we randomly generate a kernel\footnote{{https://github.com/LeviBorodenko/motionblur}} with size $61 \times 61$ with intensity of $0.5$. Finally, we consider random image inpainting, where we remove pixels randomly in all color channels. Each pixel can be removed with a uniform probability in the range $[0.7, 0.8]$. For noisy experiments, we apply an additive Gaussian noise with standard deviation $\bm{\sigma}_n=0.05$.

\subsection{Implementation details}
\textbf{CoDPS.} Our algorithm has two main parameters the prior data covariance $\sigma_{x(0)}^2$ and the gradient scalar term $\{\zeta_t\}$ as shown in Algorithms  \ref{alg:co_dps_ddim} and \ref{alg:co_dps}. For the DDIM sampling, we set $\eta=1$ and NFE$=100$ for all experiments. We have to main hyper-parameters that we tune for various problems. These are $\{\zeta_t\}_{t=1}^{t=T}$ and $\sigma^2_{\bm{x(0)}}$. For $\{\zeta_t\}_{t=1}^{T}$, we start with a fixed value and then reduce it once based on a condition related to $t$.  We report the values used in the main paper for each inverse problems in Tables \ref{tab:hyperprams_imgnet_ffhq} and \ref{tab:hyperprams_noiseless_imgnet}. We also report a variant of our method that uses DDPM sampling in Algorithm \ref{alg:co_dps}.  Both DDIM and DDPM algorithms requires the total number of diffusion steps $N$, a measurement vector $\bm{y}$, scaling parameters $\{\zeta_t\}_{t=1}^N$,  noise standard deviation ${\{\tilde\sigma_t\}_{t=1}^N}$, and prior variance hyper-parameter $\sigma_0^2$ and outputs the final reconstructed image $\hat{\bm{x}}_0$. 

\begin{table*}[htp]
\centering
\caption{Hyperparameters used for CoDPS for noisy inverse problems on FFHQ and ImageNet datasets}
\label{tab:hyperprams_imgnet_ffhq}
\resizebox{0.85\textwidth}{!}{%
\begin{tabular}{ccccccc}
\hline
         &      &           & $\times 4$ SR & Gaussian Deblur & Motion Deblur & Inpainting \\ \hline
    FFHQ     & DDIM & $\zeta_t$   &
   $\begin{cases} 
    5\text{e-2} & \text{if } t > 15 \\
    1\text{e-2} & \text{otherwise }
    \end{cases}$  & 
    $\begin{cases} 
    5\text{e-2} & \text{if } t > 15 \\
    5\text{e-3} & \text{otherwise }
    \end{cases}$        & 
    $\begin{cases} 
    5\text{e-2} & \text{if } t > 15 \\
    2.5\text{e-3} & \text{otherwise }
    \end{cases}$   & 
    $\begin{cases} 
    4\text{e-1} & \text{if } t > 10 \\
    4\text{e-3} & \text{otherwise }
    \end{cases}$ 
\\  \cdashline{2-7} 
         &      & $1/\sigma^2_{\bm{x(0)}}$ & 2     & 1               & 1             & 0          \\
 \hdashline
         & DDPM & $\zeta_t$   & 
        $\begin{cases} 
    1\text{e-2} & \text{if } t > 100 \\
    1\text{e-3} & \text{otherwise }
    \end{cases}$  & 
    $\begin{cases} 
     1\text{e-2} & \text{if } t > 200 \\
    1\text{e-3} & \text{otherwise }
    \end{cases}$        & 
    $\begin{cases} 
     1\text{e-2} & \text{if } t > 200 \\
    1\text{e-3} & \text{otherwise }
    \end{cases}$   & 
    $\begin{cases} 
    4\text{e-1} & \text{if } t > 80 \\
    4\text{e-4} & \text{otherwise }
    \end{cases}$  
\\  \cdashline{2-7} 
         &      & $1/\sigma_{\bm{x}(0)}^{2}$  & 0.1   & 0.1             & 0.1           & 5e2        \\ \hline
    ImageNet & DDIM & $\zeta_t$   &  
    $\begin{cases} 
        5\text{e-2} & \text{if } t > 20 \\
        1\text{e-2} & \text{otherwise }
        \end{cases}$  & 
        $\begin{cases} 
        4\text{e-2} & \text{if } t > 15 \\
        8\text{e-3} & \text{otherwise }
        \end{cases}$        & 
        $\begin{cases} 
         4\text{e-2} & \text{if } t > 25 \\
         1\text{e-2} & \text{otherwise }
        \end{cases}$   & 
         -  
         \\  \cdashline{2-7} 
         &      & $1/\sigma_{\bm{x}(0)}^{2}$  & $2$     & $2$            & $2$ & -      
    \\
          \hdashline
         & DDPM & $\zeta_t$   &  
         $\begin{cases} 
    2\text{e-2} & \text{if } t > 200 \\
    2\text{e-3} & \text{otherwise }
    \end{cases}$  & 
    $\begin{cases} 
    1\text{e-2} & \text{if } t > 200 \\
    1\text{e-3} & \text{otherwise }
    \end{cases}$        & 
    $\begin{cases} 
    5\text{e-3} & \text{if } t > 100 \\
    1\text{e-4} & \text{otherwise }
    \end{cases}$   & 
    $\begin{cases} 
    2\text{e-2} & \text{if } t > 80 \\
    2\text{e-4} & \text{otherwise }
    \end{cases}$    \\\cdashline{2-7} 
         &      & $1/\sigma_{\bm{x}(0)}^{2}$  & $1e-2$     & $1e-1$            & $1$            & $1\text{e1}$       
\\ \hline
\end{tabular}%
}
\end{table*}

\begin{table*}[htp]
\centering
\caption{Hyperparameters used for CoDPS for noiseless inverse problems on ImageNet dataset}
\label{tab:hyperprams_noiseless_imgnet}
\resizebox{0.75\textwidth}{!}{%
\begin{tabular}{cccccc}

\hline
         &      &           & $\times 4$ SR & Gaussian Deblur & Motion Deblur  \\ \hline
    ImageNet & DDIM & $\zeta_t$   &  
    $\begin{cases} 
        6\text{e-2} & \text{if } t > 10 \\
        4\text{e-3} & \text{otherwise }
        \end{cases}$  & 
        $\begin{cases} 
        1\text{e-1}  & \text{if } t > 10 \\
        2\text{e-3}  & \text{otherwise }
        \end{cases}$        & 
        $\begin{cases} 
        1\text{e-1}  & \text{if } t > 10 \\
        2\text{e-3}  & \text{otherwise }
        \end{cases}$         \\ \cdashline{2-6} 
         &      & $1/\sigma_{\bm{x}(0)}^{2}$  & $1\text{e-1}$     & $1$            & $1\text{e-1}$          
    \\ \hline
    \end{tabular}%
    }
    \end{table*}

\begin{algorithm}[t]
    \caption{CoDPS(DDPM)}
   \label{alg:co_dps}
    \begin{algorithmic}[1]
     \REQUIRE $N$, $\bm{y}$, \{$\zeta_t\}_{t=1}^N,  {\{\tilde\sigma_t\}_{t=1}^N}$, $\sigma_{x_0}^2$
     \STATE $\bm{x}_N \sim \mathcal{N}(\bm{0}, \bm{I})$
      \FOR{$t=N-1$ {\bfseries to} $0$}
        
         \STATE{{$\hat{\bm{s}} \gets \bm{s}_\theta(\bm{x}_t, t)$}}
         
         \STATE{{$\hat{\bm{x}}_0 \gets 
         \dfrac{1}{\sqrt{\bar\alpha_t}}(\bm{x}_t + (1 - \bar\alpha_t)\hat{\bm{s}})$}}
         
         \STATE{$\bm{z} \sim \mathcal{N}(\bm{0}, \bm{I})$}
         
         \STATE{$\bar{\bm{x}}_{t} \gets \dfrac{\sqrt{\alpha_t}(1-\bar\alpha_{t-1})}{1 - \bar\alpha_t}\bm{x}_t + \dfrac{\sqrt{\bar\alpha_{t-1}}\beta_t}{1 - \bar\alpha_t}\hat{\bm{x}}_0 +  {\tilde\sigma_t \bm{z}}$}
         
         \STATE{ {
         {$\Delta{\bm{y}} \gets \bm{y} - \mathcal{A} \left(\dfrac{\sigma_{x_0}^{2}\sqrt{\bar\alpha_t}\bar{\bm{x}}_{t} + (1-\bar\alpha_t)\hat{\bm{x}}_0}{(1-\bar\alpha_t) + \sigma_{x_0}^{2}\bar\alpha_t}\right) $ }}}
         
         \STATE{ 
         {$\bm{x}_{t-1} \gets \bar{\bm{x}}_{t} -  {\zeta_t}\nabla_{\bar{\bm{x}}_{t}} (\Delta \bm{y})^T (\sigma_n^2 \bm{I} + \sigma^2_{0|t}\mathcal{A}\mathcal{A}^T)^{-1}  (\Delta \bm{y})$}}
      \ENDFOR
      \STATE {\bfseries return} $\hat{\bm{x}}_0$
    \end{algorithmic}
\end{algorithm}

\textbf{DPS. } We utilized the official source code \cite{chung2023diffusion} and used the exact hyperparameters outlined in the original paper for each problem. We used DDPM sampling and set NFE to 1000 for all experiments.

\textbf{$\Pi$GDM. } We obtained the implementation of $\Pi$GDM from \cite{song2022pseudoinverse} for the noiseless inverse problems. We used the pseudo-code from the paper to implement the noisy version of the algorithm. We used DDIM sampling and set NFE to 100 for all experiments.

\textbf{DMPS.} We used the publicly available implementation of DMPS \cite{meng2022diffusion}. For super-resolution and Gaussian deblurring on the FFHQ dataset, we followed the configurations provided in the source code. Since the codebase did not include an image inpainting operator or configuration, we implemented it and used a default scale parameter of $1.75$.

\textbf{DDRM. } We utilized the official implementation of DDRM \cite{kawar2022denoising}. For all experiments, we used the default setting reported in the paper of $\mu_B = 1.0$ and $\eta=0.85$. We used DDIM sampling and set NFE to 20 for all experiments.

\textbf{DiffPIR. } We used the publicly available implementation of DiffPIR \cite{zhu2023denoising}. We used the exact hyperparameters reported in the paper, with NFE set to 100 for all experiments.

{\textbf{DDS.} We used the publicly available version of DDS~\cite{chung2024decomposed}. We included our forward operator implementations, since the original code only supported medical imaging tasks. We tuned the regularizer parameter $\gamma$ between range $0.5-2$ using grid search to obtain the best result for each problem.}

{\textbf{DDNM.} We used the official implementation of DDNM~\cite{wangzero}. For the noiseless deblurring and super-resolution experiments, we used the SVD versions. We set $\eta=0.85$ for all experiments.}

{\textbf{ReSample.} We used the publicly available implementation of ReSample~\cite{songsolving}. For each task, we used the corresponding configuration files. We modified the noise-level configuration to $\sigma_y=0.05$. For each problem, we tuned the scale parameters within the range $0.1$–$1$.}

\textbf{PnP-ADMM.} We used the Scientific Computational Imaging Code (SCICO) \cite{balke2022scientific} library to implement PnP-ADMM with a pre-trained 17-layer DnCNN \cite{zhang2017beyond} network. The number of maximum iterations was set to $50$ for all experiments. We performed a grid search to determine the optimal ADMM penalty parameter $\rho$, exploring values between $0.01$ and $0.4$ per experiment.

\textbf{ADMM-TV.} Similar to PnP-ADMM, we used SCICO \cite{balke2022scientific} to address our inverse problems with ADMM-TV. The number of maximum iterations was set to $50$ for all experiments. A grid search was conducted to identify the best $\rho$ parameter, ranging from $0.01$ to $0.4$, and the TV-norm regularizer $\lambda$, ranging from $0.001$ to $0.01$.

\section{Additional results and figures}
\label{sec:additional}
\subsection{Experiment on mixture of Gaussians.}

\begin{figure*}[t]
  \centering
  \includegraphics[width=\textwidth]{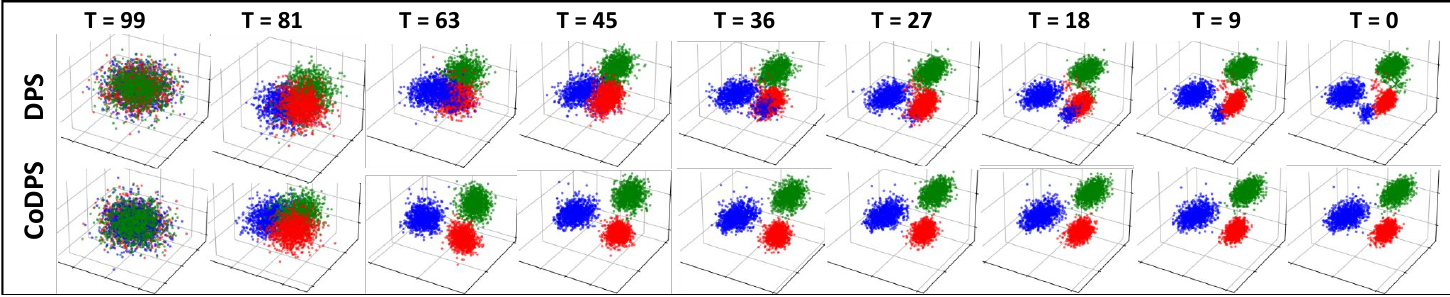}
  \caption{Reconstruction by diffusion posterior sampling using DPS (top) and CoDPS (ours). We show our method reconstructs measurements more accurately and the reconstructed sample distribution converges to the true posterior data distribution. (Data distribution and measurements for this example is shown in Figure \ref{fig:data_meas_GMM})}
  \label{fig:intro_mix_gauss_recon}
\end{figure*}

\label{sebsec:adx_exp_on_mix_gaus}
As a proof of concept, we designed a simple task where we solve an ill-posed inverse reconstruction problem from projected measurements to a lower-dimensional space. This is a generalization of the sample shown in figure \ref{fig:intro_mix_gauss_recon}. We define our prior distribution as a Gaussian mixture model, $  p(\bm{x}) = \sum_{k=1}^{K} \pi_k \mathcal{N}(\bm{x} \mid \bm{\mu}_k, \bm{\Sigma}_k),$ where $\bm{x} \in \mathbb{R}^n$, $\pi_k$ are the mixture weights that satisfy $\sum_{k=1}^{K} \pi_k = 1$ and $\pi_k \geq 0$. We will use a diffusion network to learn this prior distribution and perform posterior sampling using our proposed method. 

\textbf{Learning the data prior.} In this experiment, we limit the number of mixtures to $k=3$. We build our dataset by generating samples from an $n$-dimensional GMM model, where $n$ is a randomly selected integer between $3$ and $15$. We randomly set the means of mixtures using random $n$ dimension vectors selected uniformly form the range $[-3,3]$ and the covariance matrices using an $n\times n$ symmetric positive definite matrix whose entries are sampled uniformly from the range $[0,1]$. Now that we have a data at hand the first task it to learn its distribution using a diffusion model. We use an MLP-based model similar to the one used for toy problems in \cite{sohl2015deep}. Our diffusion model consists of 4 MLP blocks, each with an input and output feature size of 64. Additionally, the model has input and output layers to map features between the data dimension $n$ and 64. 
  
The time scale $t$ is concatenated to the input of the model and the number of diffusion steps is set to $T=100$. We trained the network for $500$ epochs and verify that we can sample from the correct distribution using DDPM reverse sampling \cite{ho2020denoising}. 

\textbf{Generating measurements.} To obtain measurements $\bm{y}$,we first generate a random projection matrix $\bm{A} \in \mathbb{R}^{m\times n}$ where $m<n$ to obtain measurements using the forward model \eqref{eq:fwd_model}, where the noise level is set as $\bm{\sigma}_n=0.05$. For a given $n$, we set $m \in \{2,\dots, n-1\}$ randomly. An example of these measurements along with the true data distribution is shown in figure \ref{fig:data_meas_GMM}. 

 \textbf{Posterior sampling.} To simplify our sampling we consider the conditioned distribution of $\bm{x}$ on a one-hot encoded latent  $\bm{z} \in \mathbb{R}^K$, where $p(\bm{z}_k=1) = \pi_k$. We can then write the conditional density as $p(\bm{x}|\bm{z}_k=1) =  \mathcal{N}(\bm{x} \mid \bm{\mu}_k, \bm{\Sigma}_k)$. We aim to obtain the MAP estimate $p(\bm{x|y,\bm{z}_k})$ by performing posterior sampling using using DPS \cite{chung2023diffusion}, our method CoDPS, and a modified version of our method where we assume $\bm{A}\bm{A}^T=\bm{I}$. The modified (simplified) CoDPS is important because in cases where we can not invert and apply the covariance correction term efficiently we resort to simplifying the expression as $\bm{A}\bm{A}^T = \bm{I}$. For these three methods, we compare the conditional covariance approximation error, that is the error between the true posterior covariance and estimated covariance from generated samples. The MAP estimate, $\hat{\bm{x}}$, of $\bm{x}$ conditioned on $\bm{y}$ and $\bm{z}_k$, which is also the Minimum mean square estimate (MMSE), has a covariance given by,
 \begin{equation}
        \label{eq:adx_map_est_gmm_cond}
     {\bm{\Sigma}}_{\hat{\bm{x}}|\bm{y},\bm{z}(k)} = \left(\bm{\Sigma}_k^{-1} + \frac{1}{\sigma_n^2} \bm{A}^T\bm{A} \right)^{-1}
 \end{equation}
 Figure \ref{fig:cov_approx_err} shows the approximate error of ${\bm{\Sigma}}_{\hat{\bm{x}}|\bm{y},\bm{z}(k)}$ and the sample covariance of the reconstructions using the posteriors sampling sampling methods DPS and CoDPS. The idea is that the sample covariance of the reconstructed estimations for each cluster in the GMM should  approximate and converge to \eqref{eq:adx_map_est_gmm_cond}. We observe that both our methods yield better approximation that improves as we obtain more measurements. The simplified CoDPS ($\bm{A}\bm{A}^T=\bm{I}$) performs much better than DPS and comparable to the version of our method that uses the correct $\bm{A}\bm{A}^T$. As the method accounts for the uncertainty in estimates $\hat{\bm{x}}_0$ \eqref{eq:var_x0_xt_approx}, we expect an improved result over DPS as discussed in \ref{subsec:prior_approx}. 

Figure \ref{fig:data_meas_GMM} shows the source data and noisy projected measurements for the Gaussian mixture experiment reported in section \ref{sebsec:exp_on_mix_gaus}, and the time evolution shown in Figure \ref{fig:intro_mix_gauss_recon}.

\begin{figure}[ht]
    \centering
    \includegraphics[width=\columnwidth]{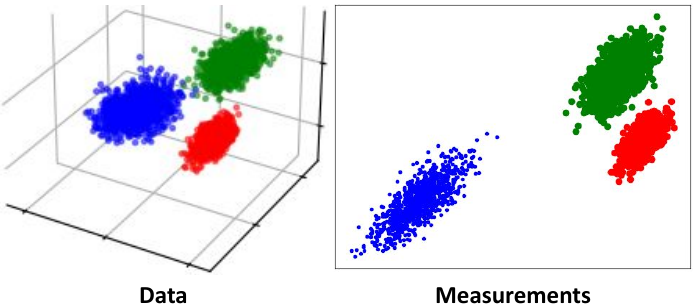}
     \caption{Source data and noisy measurements for Gaussian Mixture Model experiments. }    \label{fig:data_meas_GMM}
\end{figure}

\subsection{Additional experiments and results}

\subsubsection{Hyperparameter ablation study}
To assess the sensitivity of our proposed methods to hyperparameters, we conducted an ablation study based on the reviewer’s suggestion. Specifically, we used the $\sigma_0^2$ and gradient scaling schedule ${\zeta_t}$ tuned for the super-resolution task (SR $\times4$), and then applied the same hyperparameters directly to the Gaussian deblurring, motion deblurring, and inpainting tasks on the FFHQ dataset.

The results in Table \ref{tab:hyperparam_ablation} show that CoDPS maintains strong performance on both Gaussian and motion deblurring tasks. We even observed some improvement in terms of PSNR, despite a slight performance drop in other metrics. This shows our method is not highly sensitive across tasks. Only the inpainting task showed a degradation when using shared parameters. This can be due to its fundamentally different degradation process.

\begin{table}[ht]
\centering
\caption{Ablation study using hyperparameters tuned for SR ($\times 4$) and applied directly to other tasks. Results show that CoDPS remains competitive without per-task tuning on two new  tasks.}
\label{tab:hyperparam_ablation}
\resizebox{\textwidth}{!}{
\begin{tabular}{llllllllllllll}
\toprule

{} & \multicolumn{4}{c}{\textbf{Gaussian Debluring}} & \multicolumn{4}{c}{\textbf{Motion Deblurring}} & \multicolumn{4}{c}{\textbf{Inpainting}} \\

\cmidrule(lr){2-5}
\cmidrule(lr){6-9}
\cmidrule(lr){10-13}

{\textbf{Method}} & 
{PSNR $\uparrow$} & {SSIM $\uparrow$} & {FID $\downarrow$} & {LPIPS $\downarrow$} & 
{PSNR $\uparrow$} & {SSIM $\uparrow$} & {FID $\downarrow$} & {LPIPS $\downarrow$} & 
{PSNR $\uparrow$} & {SSIM $\uparrow$} & {FID $\downarrow$} & {LPIPS $\downarrow$} \\

\midrule

CoDPS \\(using SR4x params) & \textbf{27.78}
  & 0.794  & 36.75   &  0.258 
  & \textbf{27.22} & 0.725  &  44.38 &  0.301
  & 17.47 & 0.419  & 155.92 & 0.514 \\

\hdashline
CoDPS \\(finetuned per task) & 27.56
  & \textbf{0.799}  & \textbf{33.13}  &  \textbf{0.239} & 27.15
  &  \textbf{0.771} & \textbf{26.00}  & \textbf{0.243}  & \textbf{27.32}
  & \textbf{0.795}  & \textbf{37.87} &  \textbf{0.236}\\

\bottomrule
\end{tabular}
}
\end{table}

\label{subsec:apdx_add_figs}
In Table \ref{tab:results_linear_imgnet_inpaint} and \ref{tab:results_linear_imgnet_psnrssimlpips}, we report results on the ImageNet dataset. Our method achieved the best SSIM score while showing competitive performance to DPS in terms of PSNR and LPIPS scores.

\begin{table}[t]
\vspace{-4em}
\centering
\caption{
Performance metrics of inverse problems on \textbf{ImageNet} validation dataset. \textbf{Bold}: best, \underline{underline}: second best.
}
\resizebox{\textwidth}{!}{
\begin{tabular}{lllllllllllllll}
\toprule
{} && \multicolumn{4}{c}{\textbf{SR ($\times 4$)}} 
&
\multicolumn{4}{c}{\textbf{Deblur (Gaussian)}} & \multicolumn{4}{c}{\textbf{Deblur (motion)}}\\

\cmidrule(lr){3-6}
\cmidrule(lr){7-10}
\cmidrule(lr){11-14}

{\textbf{Method}} & {\textbf{NFE}} & 
{PSNR $\uparrow$} & {SSIM $\uparrow$} & {FID $\downarrow$} & {LPIPS $\downarrow$} & {PSNR $\uparrow$} & {SSIM $\uparrow$} & {FID $\downarrow$} & {LPIPS $\downarrow$}  & 
{PSNR $\uparrow$} & {SSIM $\uparrow$} & {FID $\downarrow$} & {LPIPS $\downarrow$} \\

\midrule

\thead[l]{$\Pi$GDM \cite{song2022pseudoinverse}} 
& 100 &  21.23 & 0.482 & 88.93  & 0.495   
&  18.29 &0.341  & 135.86  & 0.610   
& 23.89  & 0.663 &  52.15 & 0.373   \\

\thead[l]{DPS \cite{chung2023diffusion}} 
& 1000 & 21.70 & 0.562& \textbf{45.20} & \textbf{0.381}  
&  21.97 & \textbf{0.706}& 62.72 & 0.444
& 20.55  & 0.634 & 56.08 & 0.389 \\

\thead[l]{DiffPIR \cite{zhu2023denoising}}
& 100 & 20.58 & 0.430 & 110.69 & 0.560  
& 21.62  & 0.504 & 58.61 & 0.433
& \textbf{24.61}  &  \textbf{0.661} & \underline{46.72}  &  \underline{0.359}\\

\thead[l]{DDRM~\cite{kawar2022denoising}} 
& 20 &\textbf{23.64}  & \textbf{0.624}& 67.69 & 0.423  
& 22.73 & \underline{0.705}& 63.02 & 0.427
&  - &  - & -& -\\

\thead[l]{MCG~\cite{chung2022improving}} 
& 1000 &13.39  & 0.227 & 144.5 &  0.637
& 16.32  & 0.441 & 95.04 & 0.550
&  5.89 & 0.037  & 186.9 & 0.758\\

\thead[l]{PnP-ADMM~\cite{chan2016plug}}
&  - & 20.43 & 0.538 & 159.58 & 0.592  
&  20.77 & 0.545 & 105.96 & 0.502
&  21.02 & 0.563 & 108.53 & 0.500\\

\thead[l]{ADMM-TV~\cite{goldstein2009split}} 
& - & 18.40  & 0.490 & 191.97 & 0.600
& 21.52  & 0.594 & 108.10 & 0.495
&  20.39 & 0.557 & 148.80 & 0.524\\

\cmidrule(l){1-14}
\textbf{\thead[l]{CoDPS }} 
& 1000 &\underline{23.12}  & \underline{0.620} & \underline{54.62} &   \underline{0.397} 
&  \textbf{24.07} & 0.693  & \textbf{47.70} & \textbf{0.348}
& \underline{23.36}  & \underline{0.653} & \textbf{37.48} &  \textbf{0.333}\\

\textbf{\thead[l]{CoDPS }} 
& 100 & 22.17  & 0.551 & 64.78 &  0.429 
&  \underline{23.74} &  0.640  & \underline{55.03} & \underline{0.374}
&   22.90 & 0.552  &  68.60 & 0.424 \\
\bottomrule
\end{tabular}
}
\label{tab:results_linear_imgnet_psnrssimlpips}
\end{table}

\begin{table}[t]

\centering
\caption{
Performance metrics of inverse problems on \textbf{ImageNet} 256$\times$256-1k validation dataset. 
}
\resizebox{0.5\textwidth}{!}{
\begin{tabular}{lllll}

\toprule
{} 
& \multicolumn{4}{c}{\textbf{Inpainting (random)}}
\\

\cmidrule(lr){2-5}

{\textbf{Method}} & {PSNR $\uparrow$} & {SSIM $\uparrow$} & {FID $\downarrow$} & {LPIPS $\downarrow$} 
 \\

\midrule

\thead[l]{DPS \cite{chung2023diffusion}} 
& \textbf{25.02} & \underline{0.750} & \underline{32.21} & 0.269 \\

\thead[l]{DiffPIR \cite{zhu2023denoising}}
&  - & - & - & -\\

\thead[l]{DDRM~\citep{kawar2022denoising}} 
& 22.06  & 0.593 & 84.44 & 0.437\\

\thead[l]{MCG~\citep{chung2022improving}} 
& 22.62& 0.608 & \textbf{30.51} & 0.316 
\\

\thead[l]{PnP-ADMM~\citep{chan2016plug}}
&  21.41 & 0.641 & 65.55 & 0.403  \\ 

\thead[l]{ADMM-TV} 
& 18.94 & 0.546 & 138.40 & 0.540  \\

\cmidrule(lr){1-5}

\textbf{\thead[l]{CoDPS (Ours)}} 
& \underline{24.87}& \textbf{0.761} & 39.67 & \underline{0.276} \\

\bottomrule
\end{tabular}
}
\label{tab:results_linear_imgnet_inpaint}
\end{table}

We present additional results related to the experiments reported in our main paper in Figures \ref{fig:supp_fig_sr}, \ref{fig:supp_fig_mbr}, \ref{fig:supp_fig_gdbr}, \ref{fig:supp_fig_sr4x_imgnet}, and \ref{fig:add_imgnet_noiseless_gdbr}. 

\begin{figure*}[ht]
    \centering
    \includegraphics[width=\textwidth]{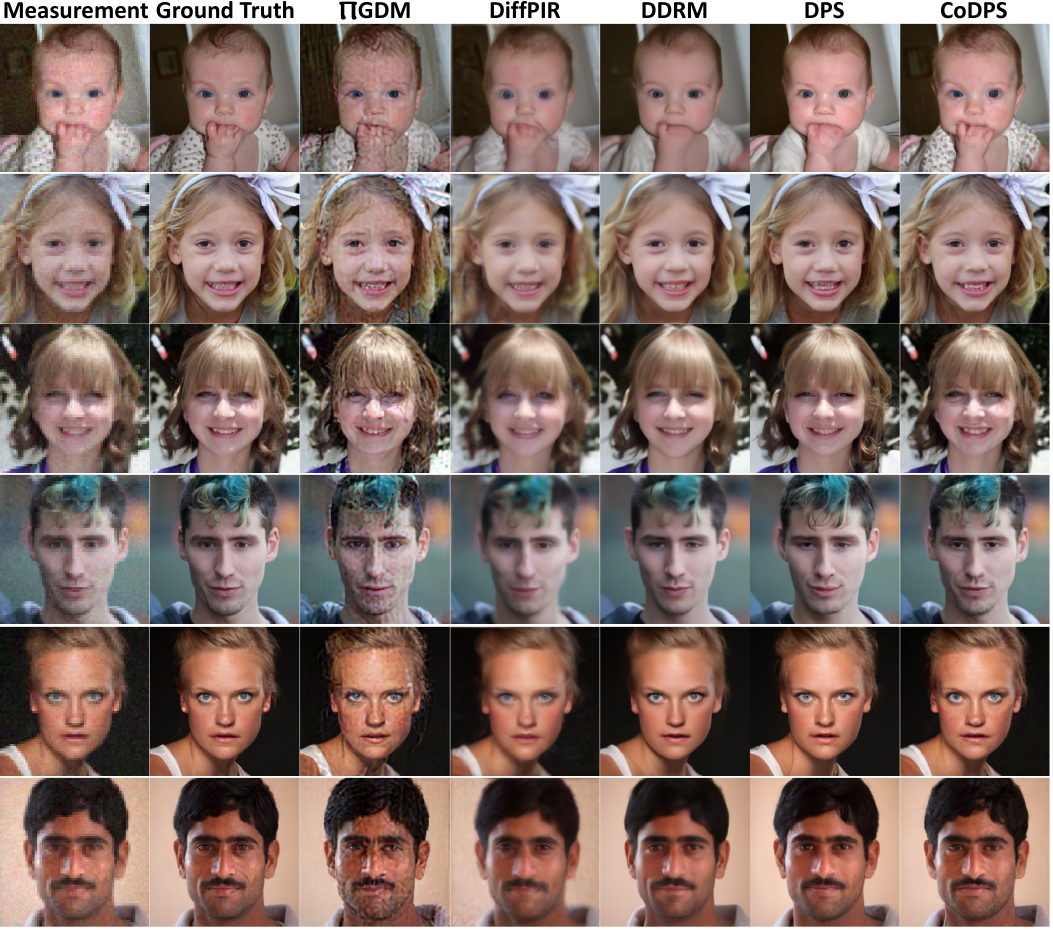}
    \caption{Motion deblurring results on FFHQ dataset}
    \label{fig:supp_fig_sr}
\end{figure*}

\begin{figure*}[ht]
    \centering
    \includegraphics[width=\textwidth]{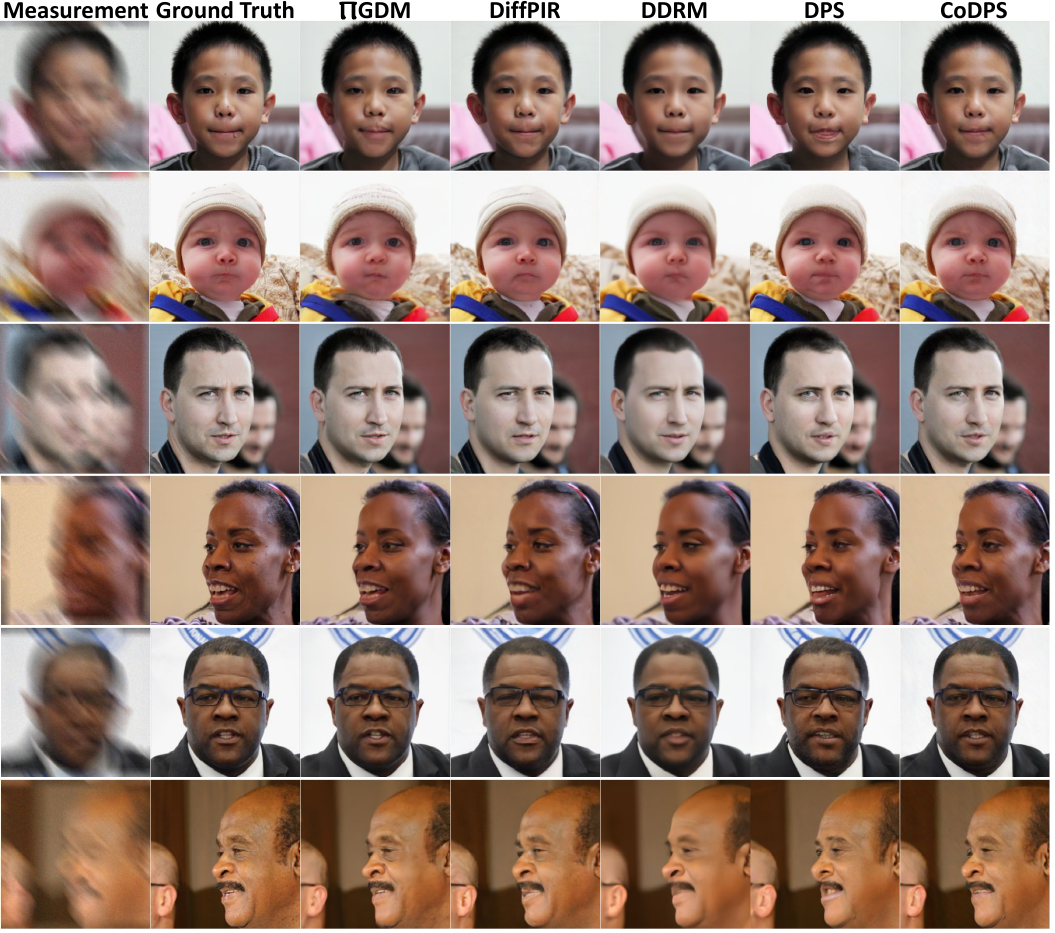}
    \caption{Super-resolution results on FFHQ dataset}
    \label{fig:supp_fig_mbr}
\end{figure*}

\begin{figure*}[ht]
    \centering
    \includegraphics[width=\textwidth]{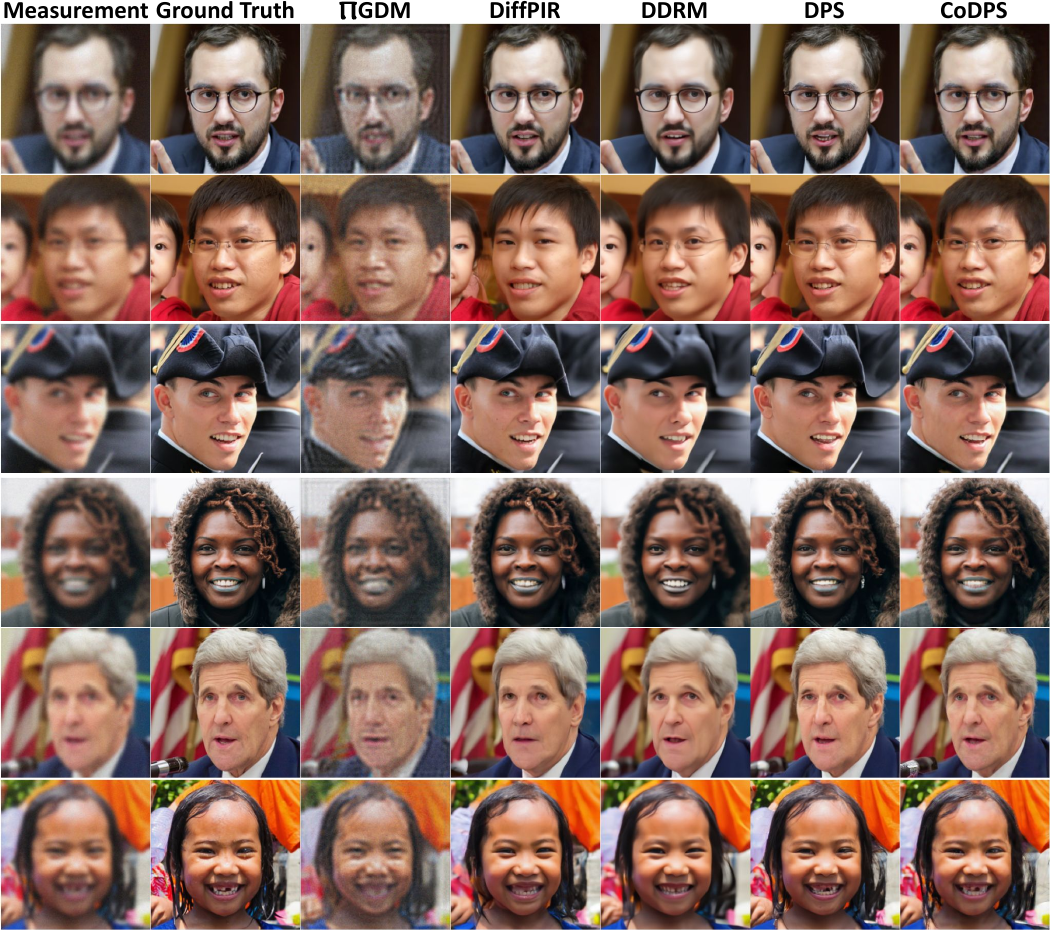}
    \caption{Gaussian deblurring results on FFHQ dataset}
    \label{fig:supp_fig_gdbr}
\end{figure*}

\begin{figure*}[ht]
    \centering
    \includegraphics[width=\textwidth]{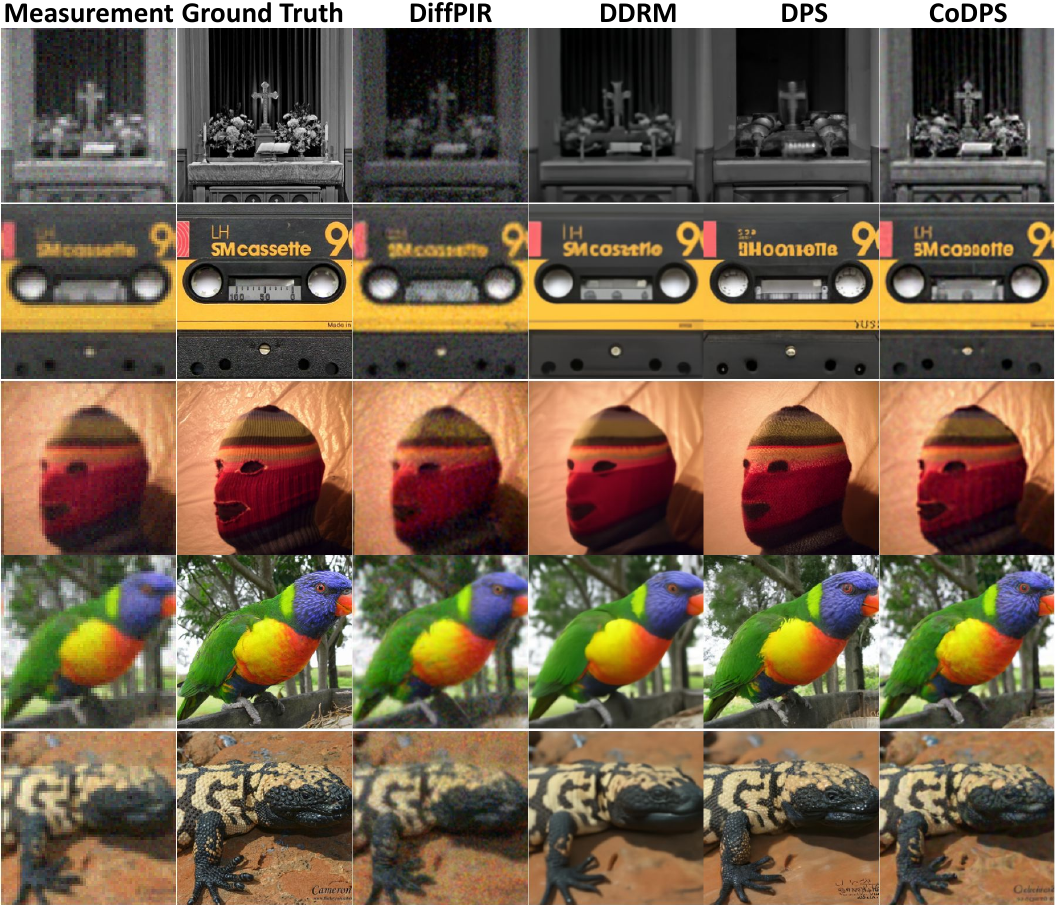}
    \caption{Noisy super-resolution $\times$4 experiments on the ImageNet dataset}
    \label{fig:supp_fig_sr4x_imgnet}
\end{figure*}

\begin{figure*}
    \centering
    \includegraphics[width=0.7\textwidth]{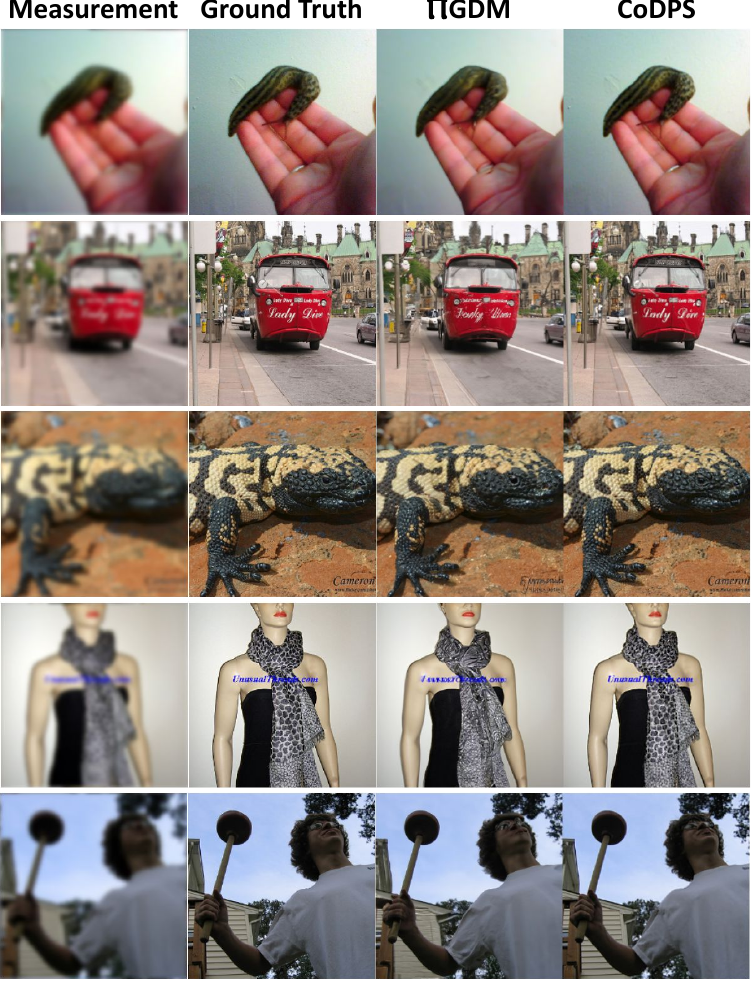}
    \caption{Noiseless Gaussian deblurring experiments on ImageNet dataset}
    \label{fig:add_imgnet_noiseless_gdbr}
\end{figure*}

\end{document}